\newtheorem{theorem}{Theorem}[section]
\newtheorem{lemma}[theorem]{Lemma}
\newtheorem{proposition}{Proposition}[section]
\newtheorem{remark}{Remark}[section]
\newcommand{\inp}[0]{{z}}
\newcommand{\inpS}[0]{{\mathcal{Z}}}
\DeclareMathOperator{\inte}{Int}
\DeclareMathOperator{\dom}{dom}
\DeclareMathOperator{\supp}{supp}
\title{Minimizing Quantum R\'enyi Divergences via Mirror Descent with {Polyak Step Size}}
\author{\(^{1}\) Jun-Kai~You\thanks{ Jun-Kai~You (\href{mailto:tom.junkai.you@gmail.com}{\texttt{tom.junkai.you@gmail.com}}) and Yen-Huan~Li \\ (\href{mailto:yenhuan.li@csie.ntu.edu.tw}{\texttt{yenhuan.li@csie.ntu.edu.tw}}) are supported by the Young Scholar Fellowship (Einstein Program) of the Ministry of Science and Technology of Taiwan under grant number MOST 110-2636-E-002-012.}, \(^{2, 3, 4, 5}\)Hao-Chung~Cheng\thanks{
H.-C.~Cheng (\href{mailto:haochung.ch@gmail.com}{\texttt{haochung.ch@gmail.com}}) is supported by the Young Scholar Fellowship (Einstein Program) of the Ministry of Science and Technology in Taiwan (R.O.C.) under grant number MOST 109-2636-E-002-001 \& 110-2636-E-002-009, and is supported by the Yushan Young Scholar Program of the Ministry of Education in Taiwan (R.O.C.) under grant number NTU-109V0904 \& NTU-110V0904 \& NTU-111L894605.}, \(^{1,2,5}\) Yen-Huan~Li}
\affil{\small $^1$Department of Computer Science and Information Engineering, National Taiwan University \\
$^2$Department of Mathematics, National Taiwan University\\
$^3$Department of Electrical Engineering and Graduate Institute of Communication Engineering, National Taiwan University\\ 
$^4$Hon Hai (Foxconn) Quantum Computing Centre\\
$^5$Center for Quantum Science and Engineering,  National Taiwan University
}
\date{}
\begin{document}

\maketitle

\begin{abstract}
Quantum information quantities play a substantial role in characterizing operational quantities in various quantum information-theoretic problems. 
We consider the numerical computation of four quantum information quantities: Petz-Augustin information, sandwiched Augustin information, conditional sandwiched R\'enyi entropy, and sandwiched R\'enyi information. 
To compute these quantities requires minimizing some order-\( \alpha \) quantum R\'enyi divergences over the set of quantum states. 
Whereas the optimization problems are obviously convex, they violate standard bounded gradient/Hessian conditions in literature, so existing convex optimization methods and their convergence guarantees do not directly apply. 
In this paper, we propose a new class of convex optimization methods called mirror descent with the Polyak step size. 
We prove their convergence under a weak condition, showing that they provably converge for minimizing quantum R\'enyi divergences. 
Numerical experiment results show that entropic mirror descent with the Polyak step size converges fast in minimizing quantum R\'enyi divergences. 
\end{abstract}

\section{Introduction} 
\label{sec:intro}
Quantum entropic and information quantities play a substantial role in characterizing operational quantities in various quantum information-theoretic tasks \cite{Tom16, Wil16, Hay17, KW20}.
For example, the quantum Augustin and R\'enyi information determine the error exponent of communications over a classical-quantum channel using certain random codebooks and that of composite quantum hypothesis testing \cite{DW14, CHT19, CGH18, MO17, MO21, HT14}. 
The quantum conditional R\'enyi entropies have profound applications to quantum cryptographic protocols, such as bounding the convergence rate of privacy amplification \cite{Dup21}, analyzing the security of quantum key distributions \cite{Ren05, Konig_2009, Tom12}, and bounding the error exponent of source coding with quantum side information \cite{CHDH2-2018, Cheng2021a}. 
The above-mentioned quantities share a common feature---they require optimizing some order-$\alpha$ quantum R\'enyi divergence over the set of all quantum states.
Hence, an efficient algorithm for solving such optimization problems is crucial to understanding the fundamental limit of several quantum information-theoretic tasks.

Numerically optimizing quantum R\'enyi divergences is a non-trivial task. 
First, opposite to the classical\footnote{In the classical case, all the underlying density matrices mutually commute, i.e.~they share a common eigenbasis.} scenario where most optimizers have analytic forms, optimizations involving the sandwiched quantum R\'enyi divergences lack closed-form expressions due to the noncommutative nature.
Second, describing a quantum state amounts to determining $4^n-1$ parameters when the underlying Hilbert space is an $n$-qubit system. Hence, na{\"i}ve searching is computationally demanding.

Whereas the problem of optimizing quantum R\'enyi divergences is obviously convex, standard convex optimization algorithms do not directly apply. 
Indeed, even projected gradient descent (PGD), arguably the simplest convex optimization algorithm, is not well-defined when optimizing quantum R\'enyi divergences. 
Notice that projection (with respect to the Frobenius norm) onto the set of quantum states often results in an extreme point, a low-rank density matrix, of the set of quantum states, and the gradients of quantum R\'enyi divergences may not exist at a low-rank density matrix. 
Then, it can happen that PGD is forced to stop at a low-rank iterate far away from the minimizer, because the next iterate is not well-defined.
A similar issue also appears in quantum process tomography and is named the ``stalling issue'' in \cite{Knee2018a}. 
Indeed, we found it difficult to tune the step size of PGD to make it empirically work in minimizing quantum R\'enyi divergences, especially when the number of qubits is large. 

The stalling issue vanishes if we use a more complicated optimization algorithm called entropic mirror descent (aka the exponentiated gradient method) \cite{Beck2003,Li2019a,Youssry2019}, as it always generates full-rank iterates. 
Moreover, entropic mirror descent is known to be scalable with respect to the parameter dimension when the constraint set is the set of quantum states (see, e.g, \cite{BenTal2015}), a desirable feature for quantum information studies. 
Unfortunately, standard entropic mirror descent lacks a convergence guarantee for minimizing quantum R\'enyi divergences. 
Entropic mirror descent is an instance of a class of optimization methods called mirror descent. 
Standard convergence analyses of mirror descent assume the function to be minimized has a bounded gradient or Hessian \cite{Auslender2006,Beck2003}. 
Recently, the bounded Hessian condition is relaxed to the relative smoothness condition \cite{Bauschke2017,Lu2018,Teboulle2018}. 
We show in Proposition \ref{proposition:unbounded_Augustin} and Proposition \ref{proposition:unbounded_conditional} that the bounded gradient/Hessian and relative smoothness conditions do not hold for quantum R\'enyi divergences. 


With regard to the above discussions, in this paper, we propose a variant of mirror descent called mirror descent with the Polyak step size. 
We prove its convergence for convex optimization under a very weak assumption that the gradient of the function to be minimized is locally bounded. 
We highlight the following two contributions. 
\begin{itemize}
  \item In theory, to the best of our knowledge, the proposed method is the first guaranteed-to-converge extension of the Polyak step size to mirror descent. 
  Previously, the Polyak step size was only considered for gradient descent-type methods. 
  
  \item In practice, numerical results show entropic mirror descent with the Polyak step size converges fast for minimizing quantum R\'enyi divergences. 
  Recall that the stalling issue vanishes with entropic mirror descent. 
  Moreover, it can be checked that our convergence guarantee applies to minimizing quantum R\'enyi divergences (see Section \ref{Applicability_of_thm}). 
  
 
\end{itemize}



%

\section{Related Work}
\label{sec:related_work}
We are not aware of any existing guaranteed-to-converge method that can minimize various notions of quantum R\'enyi divergences on the set of quantum states. 
For the specific task of minimizing the Petz--R\'enyi divergence of order \( \alpha \) (required for computing the Petz--Augustin information), a fixed-point iteration is proved to converge for \( \alpha \in \interval[open left]{0}{1} \) in \cite{Nakiboglu2019}\footnote{Via private communication with Nak{\i}bo\u{g}lu, we remark that his approach \cite[Lemma 13]{Nakiboglu2019} can be extended to proving the asymptotic convergence for the range $\alpha \in \interval[open]{1}{2}$.}, \emph{for the commuting case}. 
We show in Section \ref{sec:numerics} that a na\"{\i}ve extension of the fixed-point iteration to the non-commutative case seems to converge, but its convergence speed is significantly slower than entropic mirror descent with the Polyak step size. 

Mirror descent is a class of convex optimization methods that include, for example, gradient descent and entropic mirror descent as special cases (see, e.g., \cite{Beck2003,Nemirovsky1983}). 
Unfortunately, existing convergence analyses of mirror descent do not directly apply to minimizing quantum R\'enyi divergences, which we have already discussed. 
A variant of entropic mirror descent, entropic mirror descent with Armijo line search, is guaranteed to converge for convex differentiable minimization on the set of quantum states and hence applicable to our problem of minimizing quantum R\'enyi divergences \cite{Li2019a}. 
We show in Section \ref{sec:numerics} that entropic mirror descent with Armijo line search is also significantly slower than entropic mirror descent with the Polyak step size. 

The Polyak step size is an adaptive step size selection rule for gradient descent-type methods. 
Polyak proved the convergence of gradient descent with the Polyak step size for bounded-Hessian and strongly convex functions \cite{Polyak1969} (see also \cite{Allen1987,Goffin1977}). 
Nedi\'c and Bertsekas proved the convergence of incremental gradient descent with the Polyak step size in \cite{Nedic2001}, and analyzed the convergence rate in \cite{Nedic2000}, assuming the function to be minimized has a bounded gradient. 
Recently, there have been some works on stochastic gradient descent and mirror descent with the Polyak step size \cite{Berrada2020,Loizou2021,Oberman2019, Orazio2021}. 
These recent works, as well as the original Polyak step size, require the optimal value of the optimization problem. 
This requirement is unrealistic for our problem of minimizing quantum R\'enyi divergences. 
The step size selection rule we propose is inspired by the ``first adjustment'' of the Polyak step size introduced in \cite{Nedic2001}, which replaces the optimal value with a sequence of estimates of the optimal value. 

\section{Problem Formulation}
\label{sec:problem}

We formulate the four information-theoretic quantities we aim to compute as solutions to optimization problems.
Recall that we have addressed the relevance of the quantities in Section \ref{sec:intro}. 
As our algorithm is a first-order algorithm, we also provide gradient formulas for the four optimization problems. 

Throughout this paper, $\mathcal{H}_B := \mathbb{C}^d$ denotes a $d$-dimensional Hilbert space. Let $\mathcal{B}(\mathcal{H}_B)$ be the space of self-adjoint bounded matrices on $\mathcal{H}_B$ endowed with the Hilbert--Schmidt inner product $\langle \cdot, \cdot \rangle$.
For any matrix $\sigma_B \in \mathcal{B}(\mathcal{H}_B)$, we denote by $\supp(\sigma_B):= \{|\psi\rangle\in \mathcal{H}_B \mid \sigma_B |\psi\rangle \neq 0 \}$ the support of $\sigma_B$. We use $\|\sigma_B\|_p := \left( 
\Tr\left[|\sigma_B|^p\right] \right)^{\sfrac{1}{p}}$ to stand for the Schatten $p$-norm.

The gradient formulas in this section are derived as follows. 
Given an extended real-valued function $f:\mathcal{D}(\mathcal{H}_B)\to \mathbb{R}\cup\{\pm\infty\}$, we can represent its Fr\'echet derivative \( \mathsf{D} f \) (see e.g.~\cite[§ X]{Bhatia1997}, \cite[§ 3]{HP14}) at $\sigma \in \mathcal{D}(\mathcal{H}_B)$ as a linear functional by 
\begin{align} \label{eq:def}
\mathsf{D}f(\sigma)[\Delta] = \langle \nabla f(\sigma), \Delta \rangle, \quad \forall \, \text{ traceless }\, \Delta\in \mathcal{B}(\mathcal{H}_B),  
\end{align}
for some unique bounded matrix \( \nabla f ( \sigma ) \). 
We call $\nabla f(\sigma)$ the \emph{gradient of $f$ at $\sigma$}.
We further define 
\[
\dom \nabla f \coloneqq \{\sigma \in \mathcal{D}(\mathcal{H}_B): \nabla f(\sigma) \in \mathcal{B}(\mathcal{H}_B)  \} . 
\]

\subsection{Petz--Augustin and Sandwiched Augustin Information}

Let $\rho_{XB} = \oplus_{x} P_{X}(x)\rho_{B}^{x}$ be a classical-quantum state on the Hilbert space $\mathcal{H}_X\otimes \mathcal{H}_B$, where $\mathcal{H}_X$ can be viewed as a finite set; $P_{X}$ is a probability distribution for the random variable $X$ on it; and each $\rho_B^x$ is a density matrix (i.e.~a positive semi-definite matrix with unit trace) on the $\mathcal{H}_B$.
Without loss of generality, 
we assume that $\cup_{x:P_X(x)>0} \supp\left(\rho_B^x\right) = \mathcal{H}_B$.
We denote by \( \mathcal{D} ( \mathcal{H}_B ) \) the set of quantum density matrices on \( \mathcal{H}_B \), i.e., 
\begin{equation} \label{eq:set_B}
\mathcal{D}(\mathcal{H}_B) = \set{ \sigma_B \in \mathcal{B}(\mathcal{H}_B) | \sigma_B \geq 0, \operatorname{Tr}[\sigma_B] = 1}.
\end{equation}

The order-$\alpha$ \textit{Petz--Augustin Information} for $\rho_{XB}$ is defined by
\begin{equation} \label{eq:Petz_Augustin}
    \min_{\sigma_B \in \mathcal{D}(\mathcal{H}_B)} f_{\mathrm{Petz}}(\sigma_B), \quad f_{\text{Petz}} ( \sigma_B ) \coloneqq \mathbb{E}_{P_X}{D_{\alpha}\left(\rho_B^X \vert\vert\sigma_B\right)},
\end{equation}
where $D_{\alpha}(\rho \vert \vert\sigma)$ for density matrices $\rho,\sigma$ denotes the \emph{Petz--R\'enyi divergence} \cite{Pet86}:
\begin{equation}
    D_{\alpha}( \rho \vert \vert\sigma) := \frac{1}{\alpha-1}\log\operatorname{Tr}\left[ \rho^{\alpha}\sigma^{1-\alpha}\right], \quad \alpha \in \interval[open left]{0}{2} \backslash\{1\},
\end{equation}
when the support of $\rho$ is contained in the support of $\sigma$, and it is defined to be $+\infty$ otherwise.
The matrix function follows from the standard functional calculus; namely, the power function is applied to the eigenvalues of the specified matrix.
The minimizer of \eqref{eq:Petz_Augustin} is called the \emph{Augustin mean} \cite{Nakiboglu2019, CHT19, CGH18}. 

The domain of the objective function is 
\begin{align*}
\dom f_{\mathrm{Petz}} = \left\{\sigma_B \in \mathcal{D}(\mathcal{H}_B) \mid \supp(\rho_B^x) \subseteq \supp(\sigma_B), \; \forall x, P_X(x) > 0
\right\}.
\end{align*}
Under the assumption that $\cup_{x:P_X(x)>0} \supp\left(\rho_B^x\right) = \mathcal{H}_B$, $\dom f_{\mathrm{Petz}}$ is then equal to the set of full-rank density matrices on $\mathcal{H}_B$.

Following the derivations given in \cite[(122)]{CHT19} or using \cite[Lemma III.10]{MO21} with the power function $h_{\alpha,z}(u):=u^{\frac{1-\alpha}{z}}$ and $z=1$, the gradient of \( f_{\text{Petz}} \) at $\sigma_B \in \dom f_{\mathrm{Petz}}$ is given by
\begin{align}
\begin{split} \label{eq:Petz_gradient}
\nabla\, f_{\mathrm{Petz}}(\sigma_B) &=    \frac{1}{\alpha-1} \mathbb{E}_{P_X} \frac{ \mathsf{D} h_{\alpha,1}(\sigma_B) \left[ (\rho_B^X)^\alpha \right] }{ \Tr\left[ (\rho_B^X)^\alpha \sigma_B^{1-\alpha} \right] } \\
&= \frac{1}{\alpha-1} \mathbb{E}_{P_X} \frac{ \sum_{a,b} h_{\alpha,1}^{[1]}(a,b) P_a^{\sigma_B} (\rho_B^X)^\alpha  P_b^\sigma  }{ \Tr\left[ (\rho_B^X)^\alpha \sigma_B^{1-\alpha} \right] },
\end{split}
\end{align}
where $h_{\alpha,z}^{[1]}:[0,1]^2\to \mathbb{R}$ is the first-order divided difference function of the power function $h_{\alpha,z}$, i.e.~
\begin{align}
h_{\alpha,z}^{[1]}(a,b) := \begin{cases}
\frac{h_{\alpha,z}(a)-h_{\alpha,z}(b)}{a-b}, & a\neq b \\
h_{\alpha,z}'(a), & a=b
\end{cases},
\end{align}
and $\{P_a^{\sigma_B}\}_a$ denote the eigen-projections of $\sigma_B$. 

\smallskip
The order-$\alpha$ \emph{sandwiched Augustin Information} for $\rho_{XB}$ is defined by 
\begin{equation} \label{eq:sand_Augustin}
     \min_{\sigma_B \in \mathcal{D}(\mathcal{H}_B)} f_{\mathrm{sandA}}(\sigma_B),\quad f_{\text{sandA}} ( \sigma_B ) \coloneqq \mathbb{E}_{P_X}{D_{\alpha}^*\left(\rho_B^X \vert\vert\sigma_{B}\right)} , 
\end{equation}
where $D^*_{\alpha}(\rho||\sigma)$ for density matrices $\rho,\sigma$ is the \emph{sandwiched R\'enyi divergence} \cite{MDS+13, WWY14}:
\begin{equation} \label{eq:sand}
    D_{\alpha}^{*}\left(\rho||\sigma\right) \coloneqq \frac{1}{\alpha-1} \log\operatorname{Tr}\left[ \left( \sigma^\frac{1-\alpha}{2\alpha}\rho\sigma^\frac{1-\alpha}{2\alpha} \right)^{\alpha}\right],\quad \alpha \in [\sfrac{1}{2}, \infty) \backslash \{1\},
\end{equation}
when the support of $\rho$ is contained in the the support of $\sigma$, and it is defined to be $+\infty$ otherwise.

Following \cite[Lemma III.10]{MO21} with $z=\alpha$, the gradient of \( f_{\text{sandA}} \) at $\sigma_B \in \dom f_{\mathrm{sandA}}$ is given by
\begin{align}
& \nabla\, f_{\mathrm{sandA}}(\sigma_B) \\
    & \quad = \frac{\alpha}{\alpha-1}\mathbb{E}_{P_X} 
    \frac{ \sigma_B^{-\frac{1-\alpha}{2\alpha}} \mathsf{D} h_{\alpha,\alpha}(\sigma_B)\left[\left( \sigma_B^\frac{1-\alpha}{2\alpha}\rho_B^X \sigma_B ^\frac{1-\alpha}{2\alpha} \right)^{\alpha} \right] \sigma_B^{-\frac{1-\alpha}{2\alpha}} }{ \operatorname{Tr}\left[ \left( \sigma_B^\frac{1-\alpha}{2\alpha}\rho_B^X \sigma_B ^\frac{1-\alpha}{2\alpha} \right)^{\alpha} \right] } \\
    & \quad = \frac{\alpha}{\alpha-1}\mathbb{E}_{P_X} 
    \frac{ \sigma_B^{-\frac{1-\alpha}{2\alpha}} \sum_{a,b} h_{\alpha,\alpha}^{[1]}(a,b) P_a^{\sigma_B} \left( \sigma_B^\frac{1-\alpha}{2\alpha}\rho_B^X \sigma_B ^\frac{1-\alpha}{2\alpha} \right)^{\alpha} P_b^{\sigma_B} \sigma_B^{-\frac{1-\alpha}{2\alpha}} }{ \operatorname{Tr}\left[ \left( \sigma_B^\frac{1-\alpha}{2\alpha}\rho_B^X \sigma_B ^\frac{1-\alpha}{2\alpha} \right)^{\alpha} \right] }.
\end{align}

The following proposition shows that if we directly adopt entropic mirror descent to compute the Petz-Augustin and sandwiched Augustin information, then existing convergence guarantees for entropic mirror descent in \cite{Auslender2006,Bauschke2017,Beck2003,Lu2018,Teboulle2018} do not apply. 

\begin{proposition} \label{proposition:unbounded_Augustin}
    Both the gradients and Hessians of $f_{\mathrm{Petz}}$ and $f_{\mathrm{sandA}}$ are not bounded. The functions $f_{\mathrm{Petz}}$ and $f_{\mathrm{sandA}}$ are not smooth relative to the von Neumann entropy.
\end{proposition}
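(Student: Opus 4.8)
The plan is to exhibit explicit one-parameter families of density matrices along which the gradient norm, the Hessian norm, and the relative-smoothness quotient all blow up. Since $f_{\mathrm{Petz}}$ and $f_{\mathrm{sandA}}$ are finite only on full-rank states, the natural thing to do is to drive $\sigma_B$ toward a rank-deficient boundary point and track how fast the relevant quantities diverge. Concretely, fix a single $\rho_B^x$ (or even take $|\mathcal{H}_X|=1$) so that the problem reduces to $\sigma_B \mapsto D_\alpha(\rho\|\sigma_B)$ or $D_\alpha^*(\rho\|\sigma_B)$; if unboundedness fails here it fails for the averaged version too. Then pick $\rho$ full-rank and let $\sigma_B = \sigma_B(t)$ have a smallest eigenvalue $\lambda_{\min}(t) \to 0$ as $t \to 0$, e.g. $\sigma_B(t) = (1-t)|0\rangle\langle 0| + \tfrac{t}{d-1}(I - |0\rangle\langle 0|)$ in a basis, or a smooth curve interpolating from the maximally mixed state to an extreme point.

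The key steps, in order: \textbf{(i)} reduce to the single-$\rho$ case as above. \textbf{(ii)} For the gradient, use the closed forms \eqref{eq:Petz_gradient} (and its sandwiched analogue): because $\nabla f$ contains factors $\sigma_B^{1-\alpha}$ (for $\alpha>1$) or $\sigma_B^{-\frac{1-\alpha}{2\alpha}}$ together with divided-difference coefficients $h_{\alpha,1}^{[1]}(a,b)$ that involve negative powers of the eigenvalues of $\sigma_B$, the operator norm of $\nabla f(\sigma_B(t))$ scales like a negative power of $\lambda_{\min}(t)$, hence $\to \infty$; handle the two ranges $\alpha \in (0,1)$ and $\alpha \in (1,2)$ (resp.\ $[\tfrac12,1)\cup(1,\infty)$) separately since the offending power flips sign. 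One should also note the denominators $\Tr[\rho^\alpha \sigma_B^{1-\alpha}]$ stay bounded away from $0$ and $\infty$ along the curve (true when $\rho$ is full-rank), so they do not rescue boundedness. \textbf{(iii)} For the Hessian, either differentiate once more and read off an even more negative power of $\lambda_{\min}(t)$, or — cleaner — invoke the fact that an unbounded gradient on a bounded convex set already precludes a bounded Hessian (a $C^2$ function with $\|\nabla^2 f\|\le L$ on a bounded convex set has $L$-Lipschitz, hence bounded, gradient there). \textbf{(iv)} For relative smoothness with respect to the von Neumann entropy $h$: relative smoothness would require $Lh - f$ convex for some $L$, equivalently (in the $C^2$ regime) $\nabla^2 f \preceq L\,\nabla^2 h$ as bilinear forms on the tangent space of traceless directions. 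Since $\nabla^2 h(\sigma) = -\sigma^{-1}$ in the appropriate sense has operator norm growing only like $1/\lambda_{\min}$, it suffices to show $\nabla^2 f$ along our curve grows strictly faster than $1/\lambda_{\min}(t)$ in some traceless direction $\Delta$ — again read off from the divided-difference expansion. Choosing $\Delta$ supported on the shrinking eigenspace makes the quotient $\langle \Delta, \nabla^2 f(\sigma_B(t))\Delta\rangle / \langle \Delta, \nabla^2 h(\sigma_B(t))\Delta\rangle$ diverge.

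The main obstacle is step \textbf{(iv)}: one must be careful that the von Neumann entropy Hessian itself is singular at the boundary, so it is not enough that $\nabla^2 f$ blows up — one needs it to blow up genuinely faster than $\nabla^2 h$, and in a direction that is traceless (the constraint $\Tr\sigma_B=1$ means only traceless perturbations are admissible). This forces a somewhat careful bookkeeping of exponents: writing $\lambda_{\min}(t)=\epsilon$, one shows $\langle\Delta,\nabla^2 f\,\Delta\rangle = \Theta(\epsilon^{-c})$ with $c>1$ for a well-chosen $\Delta$, while $\langle\Delta,\nabla^2 h\,\Delta\rangle = \Theta(\epsilon^{-1})$, so no finite $L$ works. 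A secondary, more technical nuisance is justifying the second-order calculus (existence and the divided-difference form of $\nabla^2 f$) at interior points, but this is standard Fréchet differential calculus for operator functions (e.g.\ the references \cite{Bhatia1997, HP14, MO21} already cited) and need not be redone in detail; it suffices to evaluate the known formulas along the explicit curve and extract the dominant power of $\epsilon$.
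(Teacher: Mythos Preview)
Your approach is correct but takes a different route from the paper. The paper works in $\mathbb{C}^2$ with \emph{two} rank-one states $\rho_B^1=|0\rangle\langle 0|$ and $\rho_B^2=|1\rangle\langle 1|$ rather than a single full-rank $\rho$. Restricting to diagonal $\sigma=\operatorname{diag}(x,y)$, both the Petz and sandwiched objectives collapse to $g(x,y)=\frac{1}{\alpha-1}(\log x^{1-\alpha}+\log y^{1-\alpha})=-\log x-\log y$, the classical log-barrier on the simplex, \emph{independently of $\alpha$}; the paper then cites \cite[Proposition~A.1]{Li2019a} for the (by now routine) verification that the log-barrier has unbounded gradient and Hessian and is not smooth relative to Shannon entropy. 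What this buys is that no case analysis on the range of $\alpha$ is needed and no divided-difference bookkeeping enters---the reduction to a scalar function known to violate all three properties is immediate. Your route, by contrast, keeps $\rho$ full-rank and tracks explicit powers of $\lambda_{\min}$ through the operator-function calculus; this works (your exponent claim $c>1$ is correct: one gets $c=\alpha+1$ for $\alpha<1$ and $c=2$ for $\alpha>1$ in the commuting model), and it has the mild advantage of exercising the actual gradient formulas derived in the paper, but it is heavier and forces you to handle the $\alpha$-ranges separately and to watch the denominator $\Tr[\rho^\alpha\sigma^{1-\alpha}]$, which for $\alpha>1$ also blows up and must be shown not to cancel the numerator. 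One small wording issue: ``if unboundedness fails here it fails for the averaged version too'' is backwards as stated---what you mean is that the single-$\rho$ instance \emph{is} an instance of the averaged problem (with $|\mathcal{H}_X|=1$), so exhibiting unboundedness there suffices.
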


\begin{proof}
Consider the two-dimensional case, where $\sigma = (\sigma_{i,j})_{1\leq i,j\leq 2} \in \mathbb{C}^{2 \times 2}.$ Suppose that there are only two summands, with $A_1=$ $\begin{pmatrix}
  1 & 0\\ 
  0 & 0
\end{pmatrix}$
and $A_2=$ 
$\begin{pmatrix}
  0 & 0\\ 
  0 & 1
\end{pmatrix}.$ 
Then, we have $f(\sigma) = \frac{1}{\alpha-1}(\log \sigma_{1,1}^{1-\alpha} + \log \sigma_{2,2}^{1-\alpha})$.  
It suffices to disprove all properties for the specific $f$ on the set of diagonal density matrices. 
Hence, we focus on the function $g(x, y) = \frac{1}{\alpha-1}(\log x^{1-\alpha} + \log y^{1-\alpha}),$ defined for any $(x,y)$ in the probability simplex. 
The rest of the proof follows that of Proposition A.1 in \cite{Li2019a}.
\end{proof}


\subsection{Conditional Sandwiched R\'enyi Entropy and Sandwiched R\'enyi Information}

Let $\mathcal{H}_A$ be a finite-dimensional Hilbert space, and let $\rho_{AB}$ be a bipartite density matrix on $\mathcal{H}_{A} \otimes \mathcal{H}_{B}$
Without loss of generality, we further assume that the reduced density matrix $\rho_B$ on $\mathcal{H}_B$ has full support.
For any $\tau_A$ being a density matrix on $\mathcal{H}_A$ with $\supp(\rho_A)\subseteq \supp(\tau_A)$, we define the order-$\alpha$ \emph{generalized sandwiched R\'enyi information} for $\rho_{AB}$ and $\tau_A$ as
\begin{equation}
    \label{Unifying equations}
    I_{\alpha}^{*}(\rho_{AB}|\tau_{A})_{\rho}\coloneqq \mathop{\min}_{\sigma_B \in \mathcal{D}(\mathcal{H}_{B})} D_{\alpha}^{*}(\rho_{AB}|| \tau_{A} \otimes \sigma_{B}),
\end{equation}
where $D_{\alpha}^{*}(\rho||\sigma)$ is defined in \eqref{eq:sand}.
Adapting the derivations given in \cite[Lemma 22]{HT14}, we obtain the gradient at $\sigma_B \in \dom D_{\alpha}^{*}\left(\rho_{AB}|| \tau_{A} \otimes \sigma_{B} \right)$:
\begin{align}
    &\nabla D_{\alpha}^{*}\left(\rho_{AB}|| \tau_{A} \otimes \sigma_{B} \right) \\
    &= \frac{\alpha}{\alpha-1}\frac{ \sigma_B ^{-\frac{1-\alpha}{2\alpha}}
    \mathsf{D} h_{\alpha,\alpha}(\sigma_B)  \left[\operatorname{Tr}_A\left[\left((\tau_{A} \otimes \sigma_{B})^{\frac{1-\alpha}{2\alpha}}\rho_{AB}(\tau_{A} \otimes \sigma_{B} )^{\frac{1-\alpha}{2\alpha}}\right)^{\alpha}\right] \right]\sigma_B^{-\frac{1-\alpha}{2\alpha}}}{\operatorname{Tr}\left[\left((\tau_{A} \otimes \sigma_{B})^{\frac{1-\alpha}{2\alpha}}\rho_{AB}(\tau_{A} \otimes \sigma_{B})^{\frac{1-\alpha}{2\alpha}}\right)^{\alpha}\right]} \\
    &= \frac{\alpha}{\alpha-1}\frac{ \sigma_B ^{-\frac{1-\alpha}{2\alpha}}
    \sum_{a,b} h_{\alpha,\alpha}^{[1]}(a,b) P_a^{\sigma_B} \operatorname{Tr}_A\left[\left((\tau_{A} \otimes \sigma_{B})^{\frac{1-\alpha}{2\alpha}}\rho_{AB}(\tau_{A} \otimes \sigma_{B} )^{\frac{1-\alpha}{2\alpha}}\right)^{\alpha}\right] P_b^{\sigma_B} \sigma_B^{-\frac{1-\alpha}{2\alpha}}}{\operatorname{Tr}\left[\left((\tau_{A} \otimes \sigma_{B})^{\frac{1-\alpha}{2\alpha}}\rho_{AB}(\tau_{A} \otimes \sigma_{B})^{\frac{1-\alpha}{2\alpha}}\right)^{\alpha}\right]}.
\end{align}

If $\tau_{A}$ is the identity matrix $I_A$ on $\mathcal{H}_A$, the negation of the generalized sandwiched R\'enyi information is the
order-$\alpha$ \textit{conditional sandwiched R\'enyi entropy} (its negation is called the order-$\alpha$ R\'enyi coherent information):
\begin{equation}
    \label{Conditional sandwiched Renyi Entropy}
H_{\alpha}^{*}(A|B)_{\rho}\coloneqq -\mathop{\min}_{\sigma_B \in \mathcal{D}(\mathcal{H}_{B})} f_{\mathrm{cond}}(\sigma_B) , \quad f_{\mathrm{cond}}(\sigma_B) \coloneqq D_{\alpha}^{*}(\rho_{AB}|| I_A \otimes \sigma_{B}) . 
\end{equation}
The conditional sandwiched Re\'nyi entropy can be negative and is lower bounded by $-\log \min\{|\mathcal{H}_A|, |\mathcal{H}_B|\}$ \cite[Lemma 5.2]{Tom16}.

If $\tau_{A}=\rho_A \coloneqq \operatorname{Tr}_{B}[\rho_{AB}]$, where $\operatorname{Tr}_{B}$ denotes the partial trace with respect to $\mathcal{H}_{B}$, then the generalized sandwiched R\'enyi information reduces to the order-$\alpha$ \textit{sandwiched R\'enyi Information}:
\begin{equation}
    \label{Sandwiched Renyi Information}
I_{\alpha}^{*}(A:B)_{\rho} \coloneqq \mathop{\min}_{\sigma_B \in \mathcal{D}(\mathcal{H}_{B})} f_{\mathrm{sandR}}(\sigma_B) , \quad f_{\mathrm{sandR}}(\sigma_B) \coloneqq D_{\alpha}^{*}(\rho_{AB}|| \rho_{A} \otimes \sigma_{B}) . 
\end{equation}
When the system $A = X$ is classical, the density matrix $\rho_{XB} = \oplus_{x} P_{X}(x)\rho_{B}^{x}$ possesses a direct sum structure. Then, \eqref{Sandwiched Renyi Information} can be written as:
\begin{equation*}
    I_{\alpha}^{*}(X:B)_{\rho} =  \mathop{\min}_{\sigma_B \in \mathcal{D}(\mathcal{H}_{B})}\frac{1}{\alpha-1} \log\mathbb{E}_{P_X}
    \operatorname{Tr}\left[ \left( \sigma^\frac{1-\alpha}{2\alpha}\rho_{B}^{X}\sigma^\frac{1-\alpha}{2\alpha} \right)^{\alpha}\right].
\end{equation*}
Note that this is very similar to the sandwiched Augustin information given in \eqref{eq:sand_Augustin} except that the expectation $\mathbb{E}_{P_X}$ is taken inside $\log$ instead of outside $\log$.


The following proposition shows that if we directly adopt entropic mirror descent to compute the conditional sandwiched R\'enyi entropy and sandwiched R\'enyi information, then existing convergence guarantees for entropic mirror descent in \cite{Auslender2006,Bauschke2017,Beck2003,Lu2018,Teboulle2018} do not apply.

\begin{proposition} \label{proposition:unbounded_conditional}
    Both the gradients and Hessians of the functions $f_{\mathrm{cond}}$ and $f_{\mathrm{sandR}}$ are not bounded. The functions $f_{\mathrm{cond}}$ and $f_{\mathrm{sandR}}$ are not smooth relative to the von Neumann entropy.
\end{proposition}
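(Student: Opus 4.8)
The plan is to follow the strategy behind the proof of Proposition~\ref{proposition:unbounded_Augustin}: rather than analyzing $f_{\mathrm{cond}}$ and $f_{\mathrm{sandR}}$ in full generality, I would exhibit a single commuting instance on which everything collapses to a one-variable calculus statement and check all three failures there. Take $\mathcal{H}_A = \mathbb{C}$ to be one-dimensional; then $I_A = 1$, $\rho_A = 1$, $\rho_{AB} = \rho_B$, so \emph{both} $f_{\mathrm{cond}}$ and $f_{\mathrm{sandR}}$ reduce to the single map $\sigma_B \mapsto D_\alpha^*(\rho_B \,\|\, \sigma_B)$, and it suffices to treat this common function. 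Take next $\mathcal{H}_B = \mathbb{C}^2$ and $\rho_B = \operatorname{diag}(1/2, 1/2)$, and restrict attention to the diagonal full-rank states $\sigma_B(x) = \operatorname{diag}(x, 1-x)$, $x \in (0,1)$, which form a convex subset of $\dom \nabla D_\alpha^*(\rho_B \,\|\, \cdot)$. On this slice $\rho_B$ and $\sigma_B(x)$ commute, so the sandwiched R\'enyi divergence equals the classical R\'enyi divergence and the objective becomes
\[
h(x) \;\coloneqq\; D_\alpha^*(\rho_B \,\|\, \sigma_B(x)) \;=\; \frac{1}{\alpha-1}\log\frac{x^{1-\alpha} + (1-x)^{1-\alpha}}{2^{\alpha}}, \qquad x \in (0,1).
\]

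Next I would perform the boundary analysis of $h$ as $x \to 0^+$. Differentiating gives
\[
h'(x) \;=\; -\,\frac{x^{-\alpha} - (1-x)^{-\alpha}}{x^{1-\alpha} + (1-x)^{1-\alpha}},
\]
which tends to $-\infty$ as $x \to 0^+$ for every admissible $\alpha$: for $\alpha \in [1/2, 1)$ the numerator diverges while the denominator tends to $1$, and for $\alpha > 1$ both diverge with the quotient of order $-x^{-1}$. Since $h$ is $C^\infty$ on $(0,1)$ and $h'$ is already unbounded there, $h''$ is unbounded as well. Transferring these one-variable statements back to matrices via \eqref{eq:def} with the traceless direction $\Delta = \operatorname{diag}(1, -1)$: one has $h'(x) = \langle \nabla D_\alpha^*(\rho_B \,\|\, \sigma_B(x)), \Delta \rangle$ and $h''(x)$ equals the Hessian of $D_\alpha^*(\rho_B \,\|\, \cdot)$ at $\sigma_B(x)$ applied to $(\Delta, \Delta)$, so by Cauchy--Schwarz the Hilbert--Schmidt norm of the gradient and the operator norm of the Hessian are at least $|h'(x)| / \sqrt{2}$ and $|h''(x)| / 2$, respectively; both blow up along the full-rank sequence $\sigma_B(x)$ as $x \to 0^+$. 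This settles the first two claims for the common function, hence for $f_{\mathrm{cond}}$ and $f_{\mathrm{sandR}}$.

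For the relative-smoothness claim I would argue by contradiction. If $D_\alpha^*(\rho_B \,\|\, \cdot)$ were $L$-smooth relative to the von Neumann entropy $w(\sigma) = \operatorname{Tr}[\sigma \log \sigma]$ for some finite $L$, then $L w - D_\alpha^*(\rho_B \,\|\, \cdot)$ would be convex on the full-rank states; restricting to the diagonal slice, $L \widetilde{w} - h$ would be convex on $(0,1)$, where $\widetilde{w}(x) = x \log x + (1-x) \log (1-x)$, hence $(L \widetilde{w} - h)'$ would be nondecreasing. But $\widetilde{w}'(x) = \log\frac{x}{1-x}$ diverges to $-\infty$ only logarithmically as $x \to 0^+$, whereas $-h'(x)$ diverges to $+\infty$ at a polynomial rate (of order $x^{-\alpha}$ for $\alpha \in [1/2,1)$ and $x^{-1}$ for $\alpha > 1$); hence $(L \widetilde{w} - h)'(x) \to +\infty$ as $x \to 0^+$ for every fixed $L$ --- contradicting monotonicity, since a nondecreasing function on $(0,1)$ is bounded above by its value at $x = 1/2$. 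Therefore no finite $L$ works, and neither $f_{\mathrm{cond}}$ nor $f_{\mathrm{sandR}}$ is smooth relative to the von Neumann entropy.

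The only step requiring care is the last one: it needs not merely the unboundedness of $h'$ but a comparison of growth rates near the boundary --- polynomial beats logarithmic --- between $-h'$ and $\widetilde{w}'$. Everything else is the reduction to a commuting instance followed by elementary single-variable calculus, parallel to the proof of Proposition~A.1 in \cite{Li2019a}.
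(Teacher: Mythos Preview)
Your proof is correct and follows essentially the same strategy as the paper's (omitted) argument, namely the reduction to a commuting two-dimensional instance followed by one-variable calculus in the spirit of Proposition~A.1 in \cite{Li2019a}. Your choice of $\mathcal{H}_A=\mathbb{C}$ to collapse both $f_{\mathrm{cond}}$ and $f_{\mathrm{sandR}}$ to a single sandwiched divergence is a clean way to carry out the reduction, and you supply the explicit growth-rate comparison (polynomial vs.\ logarithmic) that the cited reference handles; this is exactly the content the paper defers to \cite{Li2019a}.
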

The proof is similar to that of Proposition \ref{proposition:unbounded_Augustin} and hence omitted.

\section{Optimization Preliminaries}
\label{sec:format}

We introduce basic notions in optimization theory that are necessary to understanding this paper. 
In particular, we introduce mirror descent and the standard Polyak step size. 

%
%

\subsection{Bregman Divergence}
\label{subsec:Bregman}

Let \( E \) be a finite-dimensional space endowed with a norm \( \norm{ \,\cdot\, } \). 
Let \( h: E \to \interval{- \infty}{\infty} \) be a convex function differentiable on \( \dom \nabla h \), assumed to be non-empty. 
The \emph{Bregman divergence} associated with \( h \) is defined by
\[
D_h ( x, y ) \coloneqq h ( x ) - \left[ h ( y ) + \braket{ \nabla h ( y ), x - y } \right] . 
\]
That is, \( D_h ( x, y ) \) equals the difference between \( h ( x ) \) and its first-order approximation at \( y \). 
By the convexity of \( h \), the Bregman divergence \( D_h \) is always non-negative. 
We say that \( h \) is \emph{strongly convex} with respect to the norm \( \norm{ \cdot } \) if
\[
D_h ( x, y ) \geq \frac{1}{2} \norm{ x - y } ^ 2 . 
\]
Obviously, if \( h \) is strongly convex, the \emph{level set} \( \set{ y \in \dom \nabla h | D_{h} ( x, y ) \leq \gamma } \) is bounded for any given \( x \in \dom h \) and \( \gamma \geq 0 \). 

The following two choices are perhaps the most popular. 
\begin{enumerate}
\item If we choose \( h \) to be the square Euclidean norm, obviously strongly convex with respect to the Euclidean norm, then the associated Bregman divergence is the square Euclidean distance. 
\item If we choose \( h \) to be the negative von Neumann entropy, strongly convex with respect to the Schatten \(1\)-norm by Pinsker's inequality \cite{Hiai1981}, then the associated Bregman divergence is the quantum relative entropy. 
\end{enumerate}

%
%
%

\subsection{Mirror Descent}
\label{sec:mirror_descent}

Consider the problem of minimizing a convex differentiable function \( f \) on a closed convex set \( \mathcal{Z} \subseteq E \). 
Let \( h \) be a convex function differentiable on \( \inte ( \dom h ) \) such that the closure of \( \dom \nabla h \) contains \( \mathcal{Z} \). 
Mirror descent iterates as follows. 
\begin{itemize}
\item Let the initial iterate \( z_1 \in \mathcal{Z} \). 
\item For each iteration index \( t \in \mathbb{N} \), 
\[
z_{t + 1} \in \mathop{\arg\!\min}_{z \in \mathcal{Z}} \Set {\eta_t \braket{ \nabla f ( z_t ), z - z_t } + D_h ( z, z_t )}, 
\]
for some properly chosen step size \( \eta_t > 0 \). 
\end{itemize}

The following two setups are perhaps the most popular. 
\begin{enumerate}
\item When \( E \) is an Euclidean space and \( h \) is the square Euclidean norm, mirror descent is equivalent to projected gradient descent. 
\item When the constraint set \( \mathcal{Z} \) is the set of quantum density matrices, a standard setup of mirror descent is to choose \( h \) as the von Neumann entropy. 
Then, the iteration rule has a closed form (see, e.g., \cite{Bubeck2015a,BenTal2015}): 
\begin{equation}
z_{t + 1} = c^{-1} \exp \left( \log ( z_t ) - \eta_t \nabla f ( z_t ) \right) , 
\end{equation}
where \( \exp \) and \( \log \) are matrix exponential and logarithm, respectively, and $c$ is a positive number chosen such that \( \mathop{\text{trace}} ( z_{t + 1} ) = 1 \). 
The resulting method is called entropic mirror descent. 
\end{enumerate}

Notice that the iterates of entropic mirror descent are by definition full-rank; 
then, the stalling issue discussed in Section \ref{sec:intro} vanishes. 
Moreover, if the function \( f \) has a bounded gradient, then the convergence speed of mirror descent is only logarithmically dependent on the parameter dimension (see, e.g., \cite{BenTal2015}), a desirable property for quantum information applications. 
We will adopt entropic mirror descent with \( \eta_t \) chosen by a variant of the Polyak step size for minimizing the quantum R\'enyi divergences, though our convergence guarantee applies to general mirror descent setups. 

\subsection{Polyak Step Size}

Consider minimizing a convex differentiable function \( f \) on an Euclidean space by gradient descent: 
\[
z_{t + 1} = z_t - \eta_t \nabla f ( z_t ) , 
\]
for some step size \( \eta_t > 0 \). 
Suppose the optimal value \( f^\star \coloneqq \min_z f ( z ) \) is known. 
The Polyak step size suggests choosing \( \eta_t \) following
\begin{equation} \label{eq:eta}
    \eta_{t} = \frac{f(\inp_{t})- f^\star}{ \norm{ \nabla f( z_t ) }^2 },
\end{equation}
where the norm is the Euclidean norm. 
Let \( z^\star \) be a minimizer. 
To motivate the step size, notice that by the convexity of \( f \), we have
\begin{align*}
    ||\inp_{t+1}-\inp^{\star}||^{2}
    & = ||\inp_{t}-\eta_{t}\nabla f(\inp_{t})-\inp^{\star}||^{2} \\
    & \leq ||\inp_{t}-\inp^{\star}||^{2}-2\eta_{t}(f(\inp_t)-f^{\star})+\eta_{t}^{2}||\nabla f(\inp_{t})||^{2} . 
\end{align*}
The Polyak step size minimizes the upper bound as a function of \( \eta_t \). 

Notice that the Polyak step size involves the optimal value \( f^\star \), which is not accessible in general. 
In the next section, we modify an ``adjustment'' of the Polyak step size proposed in \cite{Nedic2001}, and extend it for possibly non-Euclidean norms. 

\section{Mirror Descent with Polyak Step Size and Its Convergence}
\label{sec:pagestyle}

We describe the proposed method, mirror descent with the Polyak step size. 
Then, we present a convergence guarantee of the method. 

\subsection{Proposed Method}

Consider the general convex optimization problem: 
\begin{equation}
\label{eq:1}
\min_{z \in \mathcal{Z}} f ( z ) , 
\end{equation}
where \( \mathcal{Z} \) is a closed convex set in a finite-dimensional space \( E \), and the function \( f \) is convex on \( \mathcal{Z} \) and differentiable on \( \inte ( \dom f ) \). 
The space \( E \) is endowed with a norm \( \norm{ \cdot } \) and the dual space is endowed with the dual norm \( \norm{ \cdot }_* \). 
We assume a minimizer \( z^\star \in \mathcal{Z} \) exists. 
Notice the four optimization problems introduced in the previous section all fit in this problem template. 

The proposed method, mirror descent with the Polyak step size, iterates as mirror descent (see Section \ref{sec:mirror_descent}) with the following specific step size selection rule. 
\begin{itemize}
\item Let \(  \delta_1 \geq \delta > 0 \), \( \beta < 1 \), \( \gamma \geq 1 \), and \( c > 0 \). 
The parameter \( \delta \) specifies the desired optimization error (see Theorem \ref{Convergence_Alg:1}). 
\item For each \( t \in \mathbb{N} \), first, compute an estimate \( \tilde{f}_t \) of \( f ( z^\star ) \) following 
\begin{equation}
    \tilde f_t \coloneqq \min_{1 \leq t^\prime \leq t} f(\inp_{t^\prime})-\delta_{t} . 
    \label{eq:4}
\end{equation}
Then, compute the modified Polyak step size 
\begin{equation}
    \eta_{t} = \frac{f(\inp_{t})- \tilde f_t}{c \| \nabla f(\inp_{t})\|_{*}^{2}},
    \label{eq:3}
\end{equation}
for some constant \( c > 0 \). 

Finally, after obtaining the next iterate \( z_{t + 1} \), compute \( \delta_{t + 1} \) following
\begin{equation}
\label{eq:delta}
\delta_{t + 1} = 
\begin{cases}
\gamma \delta_t & \text{if } f ( z_{t + 1} ) \leq \tilde{f}_t , \\
\max \set{ \beta \delta_t, \delta } & \text{otherwise} . 
\end{cases}
\end{equation}
\end{itemize}

This step size selection rule mimics the ``first adjustment'' of the Polyak step size in \cite{Nedic2001}, which replaces the optimal value \( f ( z^\star ) \) with a sequence of its estimates. 
The only difference is that we allow a general norm here, whereas the norm is Euclidean in \cite{Nedic2001}. 
This slight difference is necessary for mirror descent-type methods. 
The norm should be chosen in accordance with the chosen Bregman divergence in our convergence analysis, as in standard analyses of mirror descent \cite{Auslender2006,Beck2003}.

\subsection{Convergence Analysis}

We start with two existing results that will be used in the analysis.
The first is standard in the analyses of mirror descent-type methods. 
See, e.g, \cite{Juditsky2012,Teboulle2018} for its proof. 

\begin{lemma}
\label{Dh_decrease}
    Consider solving the general convex optimization problem \eqref{eq:1} by mirror descent as specified in Section \ref{sec:mirror_descent}. 
    Suppose that $h$ is strongly convex with respect to $||\cdot||$ on $\inpS$. 
    Then, for any $t$ and $\inp \in \inpS \bigcap \dom h$, 
    \begin{equation*}
        D_{h}(\inp,\inp_{t+1})\leq D_{h}(\inp,\inp_{t})-\eta_{t}  \left(f(\inp_{t})-f(\inp) \right)
        +\frac{\eta_{t}^{2}}{2}\left\|\nabla f(\inp_{t})\right\|_{*}^{2}.
    \end{equation*}
\end{lemma}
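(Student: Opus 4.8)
The plan is to follow the classical three-point analysis of proximal/mirror updates. First I would write down the first-order optimality condition for the mirror descent subproblem. Since \( \inp_{t+1} \) minimizes the convex function \( \inp \mapsto \eta_t \langle \nabla f(\inp_t), \inp - \inp_t \rangle + D_h(\inp, \inp_t) \) over the convex set \( \inpS \), and this function is differentiable at \( \inp_{t+1} \) with gradient \( \eta_t \nabla f(\inp_t) + \nabla h(\inp_{t+1}) - \nabla h(\inp_t) \), the variational inequality
\[
\langle \eta_t \nabla f(\inp_t) + \nabla h(\inp_{t+1}) - \nabla h(\inp_t),\, \inp - \inp_{t+1} \rangle \geq 0
\]
holds for every \( \inp \in \inpS \). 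Rearranging, \( \eta_t \langle \nabla f(\inp_t), \inp_{t+1} - \inp \rangle \leq \langle \nabla h(\inp_{t+1}) - \nabla h(\inp_t),\, \inp - \inp_{t+1} \rangle \).

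Next I would invoke the three-point identity for the Bregman divergence,
\[
\langle \nabla h(\inp_{t+1}) - \nabla h(\inp_t),\, \inp - \inp_{t+1} \rangle = D_h(\inp, \inp_t) - D_h(\inp, \inp_{t+1}) - D_h(\inp_{t+1}, \inp_t),
\]
which is immediate from the definition of \( D_h \). Combining it with the previous inequality and rearranging gives
\[
D_h(\inp, \inp_{t+1}) \leq D_h(\inp, \inp_t) - D_h(\inp_{t+1}, \inp_t) + \eta_t \langle \nabla f(\inp_t),\, \inp - \inp_{t+1} \rangle.
\]
Then I would split \( \langle \nabla f(\inp_t), \inp - \inp_{t+1} \rangle = \langle \nabla f(\inp_t), \inp - \inp_t \rangle + \langle \nabla f(\inp_t), \inp_t - \inp_{t+1} \rangle \), bound the first summand by \( f(\inp) - f(\inp_t) \) using convexity of \( f \), and bound the second by the generalized Cauchy--Schwarz inequality followed by Young's inequality:
\[
\eta_t \langle \nabla f(\inp_t), \inp_t - \inp_{t+1} \rangle \leq \eta_t \|\nabla f(\inp_t)\|_{*} \|\inp_t - \inp_{t+1}\| \leq \frac{\eta_t^{2}}{2}\|\nabla f(\inp_t)\|_{*}^{2} + \frac{1}{2}\|\inp_t - \inp_{t+1}\|^{2}.
\]
Finally, strong convexity of \( h \) gives \( D_h(\inp_{t+1}, \inp_t) \geq \tfrac{1}{2}\|\inp_{t+1} - \inp_t\|^{2} \), so the term \( \tfrac{1}{2}\|\inp_t - \inp_{t+1}\|^{2} \) is absorbed into \( -D_h(\inp_{t+1}, \inp_t) \), and collecting the remaining terms yields exactly the claimed bound.

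The step I expect to be the main obstacle is the rigorous justification of the first-order optimality condition: one must argue that the argmin defining \( \inp_{t+1} \) is attained (which follows from the strong convexity of \( h \), giving bounded sublevel sets of \( D_h(\cdot, \inp_t) \), together with lower semicontinuity) and, crucially, that \( \inp_{t+1} \in \dom \nabla h \) so that \( \nabla h(\inp_{t+1}) \) is well defined --- in the entropic setting this is where the blow-up of \( \nabla h \) at the boundary of \( \inpS \) is used to force \( \inp_{t+1} \) to be full rank. Everything after that is routine algebra, so I would present these domain/existence technicalities briefly and refer to \cite{Juditsky2012,Teboulle2018} for the details.
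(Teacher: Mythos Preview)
Your argument is correct and is precisely the classical three-point/optimality-condition proof that the paper defers to the cited references \cite{Juditsky2012,Teboulle2018} rather than writing out; there is no alternative approach in the paper to compare against. Your caveat about needing \(\inp_{t+1}\in\dom\nabla h\) is well placed and is exactly why such treatments assume \(h\) is Legendre (essentially smooth and strictly convex on the interior), as the paper does in Theorem~\ref{Convergence_Alg:1}.
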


The second provides a useful property of Bregman divergences defined by convex and essentially smooth functions.

\begin{lemma}[Thm. 3.8 (i) in \cite{Bauschke1997}]
$\label{Dh_diverge}$
Let $h$ be a closed convex funtion, proper on $\dom h$, differentiable on $\inte (\dom h)$, and essentially smooth. Suppose that 
\begin{enumerate}[label = (\roman*)]
    \item $z \in \inte (\dom h)$ and the sequence $\set {z_t}_{t = 1}^{\infty} \subseteq \inte (\dom h),$
    \item  $\set {z_t}_{t = 1}^{\infty}$ converges to $z$ lying on the boundary of $\dom h$. 
\end{enumerate}
Then $\set {D_h(z,z_t)}_{t=1}^{\infty} \to \infty$.
\end{lemma}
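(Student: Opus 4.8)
The plan is to prove the statement directly. I read the hypotheses as: $h$ is closed, proper, and essentially smooth, $z$ is a fixed point of $\inte(\dom h)$, the iterates $z_t$ all lie in $\inte(\dom h)$, and $z_t \to \bar z$ with $\bar z$ on the boundary of $\dom h$; the claim is $D_h(z,z_t)\to\infty$. Write $g_t := \nabla h(z_t)$ and expand $D_h(z,z_t) = h(z) - h(z_t) - \langle g_t, z - z_t\rangle$, using the ambient Hilbert--Schmidt norm $\norm{\cdot}$ (self-dual, and in finite dimensions the blow-up below is norm-independent). Since $h(z)$ is a fixed constant, everything reduces to showing $\langle g_t, z - z_t\rangle \to -\infty$ at a rate beating $h(z_t)$. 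The first ingredient comes for free: by the standard characterization of essentially smooth convex functions on a finite-dimensional space, $\norm{g_t}\to\infty$ because $z_t$ approaches a boundary point of $\dom h$ from the interior. The real content is to show that this blow-up ``points the right way.''

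Concretely I would carry out the following steps. (1) Show $h(z_t)/\norm{g_t}\to 0$. For the $\limsup$: the subgradient inequality at $z_t$ tested at another iterate gives $h(z_t) \le h(z_{t'}) + \norm{g_t}\,\norm{z_t - z_{t'}}$; divide by $\norm{g_t}$, let $t\to\infty$ (so $\norm{g_t}\to\infty$ and $\norm{z_t - z_{t'}}\to\norm{\bar z - z_{t'}}$), then let $t'\to\infty$ using $z_{t'}\to\bar z$. For the $\liminf$: lower semicontinuity of $h$ (closedness) plus properness bounds $h(z_t)$ below by a constant (or gives $h(z_t)\to+\infty$ when $h(\bar z)=+\infty$), so the ratio is $\ge 0$ in the limit. (2) Pass to a subsequence along which the unit vectors $g_t/\norm{g_t}$ converge to some $\hat g$ with $\norm{\hat g}=1$ (finite-dimensional compactness). (3) Divide the subgradient inequality $\langle g_t, u - z_t\rangle \le h(u) - h(z_t)$, valid for every $u\in\dom h$, by $\norm{g_t}$ and pass to the limit, using step (1), to get $\langle \hat g, u - \bar z\rangle \le 0$ for all $u\in\dom h$; that is, $\hat g$ is an outer normal of $\dom h$ at $\bar z$. (4) Since $z\in\inte(\dom h)$, some ball $B(z,r)\subseteq\dom h$; testing the normal inequality at $z + r'\hat g$ yields $\langle \hat g, z - \bar z\rangle \le -r < 0$, with $r$ independent of the subsequence. (5) Combine: along the subsequence $\langle g_t, z - z_t\rangle = \norm{g_t}\,\langle g_t/\norm{g_t}, z - z_t\rangle \le -\tfrac{r}{2}\norm{g_t}\to-\infty$, so $D_h(z,z_t) \ge h(z) - h(z_t) + \tfrac{r}{2}\norm{g_t} = h(z) + \norm{g_t}\bigl(\tfrac{r}{2} - h(z_t)/\norm{g_t}\bigr) \to +\infty$ by step (1). (6) Upgrade to the whole sequence: if $D_h(z,z_t)\not\to\infty$, some subsequence stays bounded, but it still has $\norm{g_t}\to\infty$, so steps (1)--(5) apply to a further subsequence and force divergence, a contradiction.

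The hard part is steps (1) and (3)--(4). The only genuinely nontrivial issue is that $g_t$ may blow up along a direction that is \emph{not} the naive one from $z$ towards $\bar z$, so one must verify that any subsequential limit $\hat g$ is an outer normal of $\dom h$ at $\bar z$ and still makes a uniformly negative angle with $z-\bar z$; this is also exactly where care is needed when $h(\bar z)=+\infty$, since then $h(z_t)$ itself diverges and one relies on the $o(\norm{g_t})$ rate from step (1) to be sure the gradient term dominates. A shorter alternative is to invoke \cite[Thm.~3.8(i)]{Bauschke1997} verbatim, which proves precisely this (indeed in reflexive Banach spaces); the self-contained argument above is the finite-dimensional specialization.
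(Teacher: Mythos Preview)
The paper does not supply its own proof of this lemma; it simply imports it as Theorem~3.8(i) of Bauschke (1997) and moves on. Your sketch is a correct, self-contained finite-dimensional argument for exactly that result, and you already acknowledge at the end that citing \cite{Bauschke1997} verbatim is the short alternative. So you go beyond the paper rather than diverging from it: the paper treats the lemma as a black box, whereas you open the box and exhibit the mechanism (essential smoothness forces $\norm{\nabla h(z_t)}\to\infty$; any subsequential direction of blow-up is an outer normal to $\dom h$ at the boundary limit $\bar z$; since $z$ is strictly interior, that normal makes a uniformly negative angle with $z-\bar z$, and the linear term in $D_h$ dominates the $h(z_t)$ term by your step~(1)). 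One cosmetic point: in step~(4) the open ball $B(z,r)\subseteq\dom h$ gives $\langle \hat g, z-\bar z\rangle \le -r'$ for every $r'<r$, hence $\le -r$ in the limit, which is all you use.
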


\begin{remark}
We will use the lemma in the following manner: 
If a sequence \( \set{ z_t }_{t = 1}^\infty \) satisfies \( D_h ( z, z_t ) < + \infty \) for some \( z \in \inte ( \dom h ) \) and all \( t \), then the sequence does not approach the boundary of \( \dom h \).
\end{remark}

The following theorem guarantees the convergence of mirror descent with the Polyak step size up to a prescribed numerical accuracy. 

\begin{theorem}
    \label{Convergence_Alg:1}
	Consider solving \eqref{eq:1} by mirror descent with the Polyak step size. 
	Suppose that 
	\begin{itemize}
	\item \( \min_{z \in \mathcal{Z}} f ( z ) > - \infty \), 
	\item The set of minimizers \( \mathcal{Z}^{\star}\) lies in the closure of \(\dom \nabla h \), 
	\item the sequence of iterates \( \set{ z_t }_{t = 1}^\infty \) is well-defined,
	\item the function \( h \) is essentially smooth and strongly convex with respect to the norm \( \norm{ \cdot } \) on \( \mathcal{Z} \), and 
	\item \( \nabla f \) is bounded on any compact subset of \( \mathcal{Z} \cap \dom \nabla h \).
	\end{itemize}
	If we choose \( c > 1 / 2 \), then 
	\[
	\inf_{t \geq 1} f ( z_t ) \leq f ( z^\star ) + \delta . 
	\]
    
\end{theorem}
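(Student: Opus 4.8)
The plan is to argue by contradiction: suppose that $f(z_t) > f(z^\star) + \delta$ for all $t$, and derive that the iterates must leave every compact subset of $\mathcal{Z} \cap \dom \nabla h$, which will ultimately conflict with Lemma \ref{Dh_diverge}. First I would examine the behavior of the estimate sequence $\tilde f_t$ and the correction sequence $\delta_t$. Under the contradiction hypothesis, $\min_{1 \le t' \le t} f(z_{t'}) \geq f(z^\star) + \delta$ for all $t$, so $\tilde f_t = \min_{t' \le t} f(z_{t'}) - \delta_t$, and I would track whether the ``good'' case ($f(z_{t+1}) \le \tilde f_t$) in \eqref{eq:delta} can occur infinitely often. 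The key dichotomy: either the good case happens infinitely often, in which case $\delta_t$ gets multiplied by $\gamma \ge 1$ infinitely often and (since in bad steps it only decreases to at most $\max\{\beta\delta_t,\delta\}$ with $\beta<1$) one shows $\delta_t \to \infty$ along a subsequence while $\min_{t'\le t} f(z_{t'})$ is monotonically nonincreasing — forcing $\tilde f_t \to -\infty$, contradicting $\tilde f_t \ge \min_z f(z) > -\infty$ once $f(z_{t'})$ is bounded below (which it is, being $\ge f(z^\star)$). Hence eventually only the bad case occurs, so from some index $T$ onward $\delta_t$ is nonincreasing and bounded below by $\delta$, thus $\delta_t \to \delta_\infty \ge \delta$, and in fact $\delta_t = \delta$ for all large $t$ or $\delta_t$ decreases geometrically to $\delta$; in either case $\delta_t \ge \delta$ always and $f(z_{t+1}) > \tilde f_t = \min_{t'\le t} f(z_{t'}) - \delta_t$ for all $t \ge T$.

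Next I would feed this into the descent inequality. Applying Lemma \ref{Dh_decrease} with $z = z^\star$ (which lies in $\overline{\dom \nabla h}$; I would note that the inner-product terms make sense because the argument only needs $z^\star \in \mathcal{Z} \cap \dom h$, or handle the closure via a limiting argument as in standard mirror-descent analyses) and substituting the step size \eqref{eq:3},
\begin{align*}
D_h(z^\star, z_{t+1}) &\le D_h(z^\star, z_t) - \eta_t\big(f(z_t) - f(z^\star)\big) + \frac{\eta_t^2}{2}\|\nabla f(z_t)\|_*^2 \\
&= D_h(z^\star, z_t) - \frac{f(z_t) - \tilde f_t}{c\|\nabla f(z_t)\|_*^2}\big(f(z_t) - f(z^\star)\big) + \frac{(f(z_t) - \tilde f_t)^2}{2c^2\|\nabla f(z_t)\|_*^2}.
\end{align*}
Now $f(z_t) - \tilde f_t \ge f(z_t) - (f(z_t) - \delta_t) = \delta_t \ge \delta > 0$, and under the contradiction hypothesis $f(z_t) - f(z^\star) \ge \delta > 0$; also $f(z_t) - \tilde f_t \le f(z_t) - f(z^\star) + \delta_t$ since $\tilde f_t \ge \min_z f(z) - \delta_t \ge f(z^\star) - \delta_t$... more directly $\tilde f_t = \min_{t'\le t}f(z_{t'}) - \delta_t \ge f(z^\star) - \delta_t$ would go the wrong way, so instead I use $f(z_t) - \tilde f_t = f(z_t) - \min_{t'\le t} f(z_{t'}) + \delta_t$ and the fact that for $t \ge T$ this is controlled. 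The crucial positivity: combining the two middle/last terms, since $c > 1/2$ we have $\frac{1}{2c^2} < \frac{1}{c}$, and one shows the net change $D_h(z^\star,z_{t+1}) - D_h(z^\star, z_t) \le -\frac{f(z_t)-\tilde f_t}{c\|\nabla f(z_t)\|_*^2}\big[(f(z_t)-f(z^\star)) - \frac{1}{2c}(f(z_t)-\tilde f_t)\big]$; picking this apart with $f(z_t) - \tilde f_t \le f(z_t) - f(z^\star) + \delta_t$ and $f(z_t) - f(z^\star) > \delta$ shows the bracket is bounded below by a positive multiple of $(f(z_t) - f(z^\star))$ once $\delta_t$ is close enough to $\delta$ — this is where I expect to need the freedom in choosing $c > 1/2$ strictly and the convergence $\delta_t \to$ something $\le$ (a controlled value). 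So $D_h(z^\star, z_t)$ is eventually strictly decreasing, hence bounded, hence by the Remark after Lemma \ref{Dh_diverge} the sequence $\{z_t\}$ stays in a compact subset $K$ of $\mathcal{Z} \cap \dom \nabla h$ (using essential smoothness of $h$, so it does not approach $\partial \dom h$; boundedness of the level set follows from strong convexity of $h$).

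Finally, on $K$ the gradient $\nabla f$ is bounded, say $\|\nabla f(z_t)\|_* \le G$, by the last hypothesis. Then the per-step decrease is at least $\frac{\delta}{cG^2}\cdot[\text{positive bracket}] \ge \frac{\delta}{cG^2}\cdot \kappa\,\delta$ for a constant $\kappa > 0$, a fixed positive amount, so summing over $t$ gives $D_h(z^\star, z_t) \to -\infty$, contradicting the non-negativity of the Bregman divergence. Therefore the contradiction hypothesis fails and $\inf_{t\ge 1} f(z_t) \le f(z^\star) + \delta$. The main obstacle I anticipate is the bookkeeping around $\tilde f_t$ and $\delta_t$: one must show the good case in \eqref{eq:delta} occurs only finitely often under the contradiction hypothesis (ruling out the $\tilde f_t \to -\infty$ branch cleanly requires that $\min_{t'\le t} f(z_{t'})$ is bounded below, which holds since every $f(z_{t'}) \ge f(z^\star)$), and then to propagate a uniform lower bound on the bracketed quantity — that is, to convert ``$c > 1/2$'' into a concrete constant $\kappa$ independent of $t$ — which is delicate precisely because $f(z_t) - \tilde f_t$ can be as large as $f(z_t) - f(z^\star) + \delta_t$ and one needs the $\frac{1}{2c}$ factor to leave room. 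A secondary subtlety is justifying the use of Lemma \ref{Dh_decrease} with $z = z^\star$ when $z^\star$ is only in the closure of $\dom \nabla h$; the standard fix is that $D_h(z^\star, z_t)$ is still well-defined and finite for $z_t$ in the interior, and the inequality extends by the three-point identity, or one simply assumes (as the theorem's hypotheses effectively arrange) that the relevant quantities are finite.
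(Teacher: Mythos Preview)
Your overall strategy---argue by contradiction, show the ``good'' branch of \eqref{eq:delta} can occur only finitely often so that eventually $\delta_t=\delta$, feed the Polyak step into Lemma~\ref{Dh_decrease}, and then bound the gradients via a compactness argument based on Lemma~\ref{Dh_diverge}---matches the paper exactly. The gap is in the one step you yourself flag as delicate: your lower bound on the ``bracket'' is too loose to recover the condition $c>1/2$.

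Concretely, you use $\tilde f_t \ge f(z^\star)-\delta_t$, i.e.\ $f(z_t)-\tilde f_t \le (f(z_t)-f(z^\star))+\delta_t$. Plugging this in, the bracket becomes at least $(1-\tfrac{1}{2c})(f(z_t)-f(z^\star))-\tfrac{\delta_t}{2c}$, and with $\delta_t=\delta$ and $f(z_t)-f(z^\star)>\delta$ this only yields a lower bound of $\delta(1-\tfrac{1}{c})$, which is nonpositive for $c\in(1/2,1]$. So as written, your argument proves the theorem only for $c>1$.

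The missing observation is that, once the contradiction hypothesis is stated correctly as $\inf_t f(z_t)\ge f(z^\star)+\delta+\varepsilon$ for some $\varepsilon>0$ (note: ``$f(z_t)>f(z^\star)+\delta$ for all $t$'' is \emph{not} the negation of the conclusion), one has, for $t\ge\bar t$,
\[
\tilde f_t=\min_{t'\le t}f(z_{t'})-\delta\ \ge\ f(z^\star)+\varepsilon .
\]
The paper then uses this slack together with continuity of $f$ to pick a \emph{single} point $\bar z\in\mathcal{Z}\cap\inte(\dom h)$ with $f(\bar z)\le\tilde f_t$ for all $t\ge\bar t$, and applies Lemma~\ref{Dh_decrease} at $\bar z$ rather than at $z^\star$. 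With $f(\bar z)\le\tilde f_t$ the two terms combine exactly:
\[
-\eta_t\bigl(f(z_t)-f(\bar z)\bigr)+\tfrac{\eta_t^2}{2}\|\nabla f(z_t)\|_*^2
\ \le\ -\tfrac{1}{c}\Bigl(1-\tfrac{1}{2c}\Bigr)\frac{(f(z_t)-\tilde f_t)^2}{\|\nabla f(z_t)\|_*^2},
\]
which is strictly negative precisely when $c>1/2$; no further bookkeeping with $\delta_t$ is needed. The choice of $\bar z$ in the \emph{interior} simultaneously disposes of your ``secondary subtlety'': Lemma~\ref{Dh_diverge} requires the reference point to lie in $\inte(\dom h)$, so working with $\bar z$ rather than a possibly boundary $z^\star$ makes the compactness step go through without any limiting argument.
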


\begin{remark}
The first three assumptions ensure that the problem and proposed method are well defined. 
See, e.g., \cite{Bauschke2017} for a discussion.
The fourth assumption implies that the function \( h \) is Legendre. 
This assumption is standard in setting up mirror descent-type methods.
The last assumption assumes a locally bounded gradient.
This assumption enables us to establish the convergence guarantee without the bounded gradient/Hessian and relative smoothness conditions.
\end{remark}

\begin{proof}

We prove by contradiction. Suppose that
\begin{equation}
    \inf_{t \geq 1}f(\inp_t) \geq f(\inp^{\star}) + \delta + \varepsilon, \label{eq:5}
\end{equation}
for some \( \varepsilon > 0 \).

We can observe that the inequality \( f(\inp_{t+1}) > \tilde f_t \) must hold for infinitely many iterations. 
Otherwise, by \eqref{eq:delta}, we have \( \inf_{t \geq 1} f(\inp_t) = - \infty \), contradicting the assumption that $f(\inp^{\star})>-\infty$.
Then, by the iteration rule, there exists some \( \Bar{t} \in \mathbb{N} \) such that 
\begin{equation}
    \delta_t = \delta, \quad \forall t \geq \Bar{t}. \label{eq:6}
\end{equation}
By $\eqref{eq:5}$ and $\eqref{eq:6}$, we have
\begin{align}
        \tilde f_t &= \min_{1 \leq t^\prime \leq t} f(\inp_{t^\prime})-\delta \nonumber\\
        &\geq \inf_{t^\prime \geq 1}f(\inp_{t^\prime})-\delta \geq f(\inp^{\star}) + \varepsilon, \quad \forall t \geq \bar{t}.
\end{align}
Since the objective function \( f \) is continuous, we can find some \( \bar{z} \in \mathcal{Z} \) such that
    \begin{equation}
    \label{f_tilde_t}
        \tilde f_t \geq f(\bar{z}),\quad \forall t \geq \bar t.
    \end{equation}
By Lemma~\ref{Dh_decrease} with $\inp = \Bar{\inp}$ and the definition of the step size, we obtain for any $t\geq\Bar{t}$,
\begin{align}
        D_{h}(\Bar{\inp},\inp_{t+1})&\leq D_{h}(\Bar{\inp},\inp_{t})-\eta_{t} \left(f(\inp_{t})-f(\Bar{\inp})\right) \nonumber +\frac{\eta_{t}^{2}}{2}\left\|\nabla f(\inp_{t})\right\|_{*}^{2}\nonumber\\
        &\leq D_{h}(\Bar{\inp},\inp_{t})-\eta_{t} (f(\inp_{t})-\tilde{f_t})\nonumber +\frac{\eta_{t}^{2}}{2}\left\|\nabla f(\inp_{t})\right\|_{*}^{2}\nonumber\\
        &=D_{h}(\Bar{\inp},\inp_{t})- \left(1-\frac{1}{2c}\right)\left(\frac{f(\inp_t)-\tilde{f_t}}{\left\|\nabla f(\inp_{t})\right\|_{*}}\right)^{2}, \label{eq:8}
\end{align}
where the second inequality follows from $\eqref{f_tilde_t}$ and the equality follows from the definition of the step size. 
A telescopic sum gives
\begin{equation}
    \left(1-\frac{1}{2c}\right)\sum_{t = \Bar{t}}^{\infty}\left(\frac{f(\inp_t)-\tilde{f_t}}{\left\|\nabla f(\inp_{t})\right\|_{*}}\right)^{2}\leq D_h(\Bar{\inp},\inp_{\Bar{t}})< \infty. \label{eq:9}
\end{equation}

It remains to show that \( \set{ \norm{ \nabla f ( z_t ) }_* }_{t = 1}^\infty \) is bounded from above by some \( M > 0 \).
If this is actually the case, then
\begin{equation*}
    \sum_{t = \Bar{t}}^{\infty}\left(\frac{f(\inp_t)-\tilde{f_t}}{\left\|\nabla f(\inp_{t})\right\|_{*}}\right)^{2} \geq
    \sum_{t = \Bar{t}}^{\infty}\frac{\delta^{2}}{M^{2}} = \infty,
\end{equation*}
contradicting $\eqref{eq:9}$; 
then, it must hold that $\inf_{t \geq 1}f(\inp_t) \leq f(\inp^{\star})+\delta$. 

By \eqref{eq:8}, we know that the sequence \( \set{ D_h ( \Bar{\inp}, z_t ) }_{t = 1}^\infty \) is bounded. 
Since the function \( h \) is strongly convex, the sequence of iterates \( \set{ z_t }_{t = 1}^\infty \) is bounded in \( \mathcal{Z} \cap \dom \nabla h \). 
Unfortunately, the set \( \mathcal{Z} \cap \dom \nabla h \) may not be closed as \( \dom \nabla h \) may not be closed (e.g., in entropic mirror descent), so this does not imply the boundedness of \( \set{ \norm{ \nabla f ( z_t ) }_* }_{t = 1}^\infty \) even if \( f \) is continuously differentiable. 
Nevertheless, Lemma \ref{Dh_diverge} ensures that the sequence \( \set {z_t}_{t = 1}^\infty \) does not approach the boundary of \( \dom \nabla h \).
Therefore, we can find a compact (bounded and closed) set in \(\mathcal{Z} \cap \dom \nabla h \) that contains \( \set {z_t}_{t = 1}^\infty \).
Then, by the last assumption of the theorem, the sequence \( \set{ \norm{ \nabla f ( z_t ) }_* }_{t = 1}^\infty \) must be bounded from above.


\end{proof}

\begin{remark}
The convergence analysis indeed applies to convex non-differentiable objective functions, as long as there are locally bounded subgradients.
\end{remark}

\subsection{Applicability of Theorem \ref{Convergence_Alg:1} to Minimizing \\ Quantum R\'enyi Divergences }
\label{Applicability_of_thm}
The following proposition helps us to explain why the assumptions in Theorem \ref{Convergence_Alg:1} hold for our considered problems in Section~\ref{sec:problem}.
\begin{proposition}
\label{proposition: grad_bounded_by_eigenvalue}
    The gradients of $f_{\mathrm{Petz}},$ $f_{\mathrm{sandA}},$ $f_{\mathrm{cond}}$ and $f_{\mathrm{sandR}}$ are upper bounded by the reciprocals of their minimum eigenvalues of $\sigma_B$ in the Schatten $1$-norm.
\end{proposition}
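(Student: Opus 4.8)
The plan is to bound $\| \nabla f ( \sigma_B ) \|_1$ for each of the four objectives directly from the gradient formulas of Section~\ref{sec:problem}, using as the only analytic input the integral representations of the power functions and their Fr\'echet derivatives (see, e.g., \cite[Ch.~V, X]{Bhatia1997}, \cite{HP14}): for $\beta \in ( 0, 1 )$ and any positive-definite $\sigma$ one has $\sigma^{\beta} = \tfrac{\sin \pi \beta}{\pi} \int_0^\infty \lambda^{\beta - 1} \sigma ( \sigma + \lambda )^{-1} \, \mathrm{d} \lambda$ and $\sigma^{-\beta} = \tfrac{\sin \pi \beta}{\pi} \int_0^\infty \lambda^{-\beta} ( \sigma + \lambda )^{-1} \, \mathrm{d} \lambda$, hence, differentiating under the integral sign and using $\mathsf{D} [ ( \sigma + \lambda )^{-1} ] [ R ] = - ( \sigma + \lambda )^{-1} R ( \sigma + \lambda )^{-1}$,
\begin{align}
  \mathsf{D} ( u^{\beta} ) ( \sigma ) [ R ] &= \frac{ \sin \pi \beta }{ \pi } \int_0^\infty \lambda^{\beta} \, ( \sigma + \lambda )^{-1} R ( \sigma + \lambda )^{-1} \, \mathrm{d} \lambda , \\
  \mathsf{D} ( u^{-\beta} ) ( \sigma ) [ R ] &= - \frac{ \sin \pi \beta }{ \pi } \int_0^\infty \lambda^{-\beta} \, ( \sigma + \lambda )^{-1} R ( \sigma + \lambda )^{-1} \, \mathrm{d} \lambda .
\end{align}
I write $\lambda_{\min}$ for the smallest eigenvalue of $\sigma_B$ and use throughout that, for $R \succeq 0$ and any function $g$ of $\sigma_B$, the matrix $g ( \sigma_B ) R \, g ( \sigma_B )$ is positive, so $\| g ( \sigma_B ) R \, g ( \sigma_B ) \|_1 = \operatorname{Tr} [ g ( \sigma_B )^2 R ] \le \| g ( \sigma_B )^2 \|_\infty \operatorname{Tr} [ R ]$. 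The admissible ranges $\alpha \in ( 0, 2 ) \setminus \{ 1 \}$ (Petz) and $\alpha \in [ 1/2, \infty ) \setminus \{ 1 \}$ (sandwiched) are exactly what places every exponent appearing below inside $( - 1, 1 )$, so all the integrals converge.

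For $f_{\mathrm{Petz}}$ I would substitute the representation of $\mathsf{D} h_{\alpha, 1}$ with $R = ( \rho_B^X )^\alpha \succeq 0$ into \eqref{eq:Petz_gradient}, and also rewrite the denominator $\operatorname{Tr} [ ( \rho_B^X )^\alpha \sigma_B^{1 - \alpha} ]$ via the corresponding integral representation of $\sigma_B^{1 - \alpha}$. By positivity the numerator becomes a positive multiple of $\int \omega ( \lambda ) \operatorname{Tr} [ ( \rho_B^X )^\alpha ( \sigma_B + \lambda )^{-2} ] \, \mathrm{d} \lambda$ for an explicit weight $\omega$; for $\alpha > 1$ this weight already matches the one in the denominator, and for $\alpha < 1$ a single integration by parts in $\lambda$ aligns them (the boundary terms vanish precisely because $\alpha \in ( 0, 1 )$, at the cost of an explicit factor $1 - \alpha$). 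It then suffices to compare integrands: $\sigma_B \succeq \lambda_{\min} I$ gives both $( \sigma_B + \lambda )^{-2} \preceq \lambda_{\min}^{-1} ( \sigma_B + \lambda )^{-1}$ and $( \sigma_B + \lambda )^{-1} \preceq \lambda_{\min}^{-1} \sigma_B ( \sigma_B + \lambda )^{-1}$, so the ratio of the two integrals is at most $\lambda_{\min}^{-1}$ times the factor just mentioned. Cancelling against the prefactor $1 / | \alpha - 1 |$ yields $\| \nabla f_{\mathrm{Petz}} ( \sigma_B ) \|_1 \le \max \{ 1, ( \alpha - 1 )^{-1} \} \, \lambda_{\min}^{-1}$.

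For the sandwiched quantities I would treat $f_{\mathrm{sandA}}$, $f_{\mathrm{cond}}$ and $f_{\mathrm{sandR}}$ uniformly through the formula for $\nabla D_\alpha^* ( \rho_{AB} \| \tau_A \otimes \sigma_B )$, since those three arise from it by specializing $\tau_A$ and, for $f_{\mathrm{sandA}}$, averaging over $P_X$. The key structural observation is that the operator $R \coloneqq \operatorname{Tr}_A [ ( \, \cdots )^\alpha ] \succeq 0$ in the numerator satisfies $\operatorname{Tr} [ R ] = \operatorname{Tr} [ ( \, \cdots )^\alpha ]$, which is precisely the scalar in the denominator; since $R$ enters the numerator only linearly and is handled through $\| g ( \sigma_B ) R \, g ( \sigma_B ) \|_1 \le \| g ( \sigma_B )^2 \|_\infty \operatorname{Tr} [ R ]$, that scalar cancels and the $\sigma_B$-dependence of $R$ through $( \tau_A \otimes \sigma_B )^{\frac{1 - \alpha}{2 \alpha}}$ never needs to be unpacked. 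Writing $s = \frac{1 - \alpha}{2 \alpha}$ and inserting the representation of $\mathsf{D} h_{\alpha, \alpha}$ (whose exponent $2 s = \frac{1 - \alpha}{\alpha}$ lies in $( - 1, 1 )$), the flanking factors $\sigma_B^{- s}$ combine with the resolvents into $\sigma_B^{- s} ( \sigma_B + \lambda )^{-1}$, a function of $\sigma_B$; its squared operator norm is $\max_{\mu} \mu^{- 2 s} / ( \mu + \lambda )^2$, attained at $\mu = \lambda_{\min}$ when $\alpha < 1$ and bounded by $( \lambda_{\min} + \lambda )^{2 | s | - 2}$ via $\mu^{2 | s |} \le ( \mu + \lambda )^{2 | s |}$ in the non-monotone case $\alpha > 1$. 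The remaining $\lambda$-integral is an elementary Beta integral; evaluating it with the reflection formula $\Gamma ( \beta ) \Gamma ( 1 - \beta ) = \pi / \sin \pi \beta$ and multiplying by the prefactor $\alpha / | \alpha - 1 |$ produces $\| \nabla D_\alpha^* ( \rho_{AB} \| \tau_A \otimes \sigma_B ) \|_1 \le C_\alpha \, \lambda_{\min}^{-1}$ with $C_\alpha$ depending only on $\alpha$ (in fact $C_\alpha = 1$ for $\alpha \in [ 1/2, 1 )$).

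The main obstacle is the bookkeeping across the two sign regimes $\alpha < 1$ and $\alpha > 1$: the power representations, the integration-by-parts step for $f_{\mathrm{Petz}}$, and the resolvent comparisons all take slightly different forms, and one must verify in each case that every exponent falls inside the convergence window dictated by the stated ranges of $\alpha$. The only genuinely delicate point is the sandwiched case with $\alpha > 1$, where $\mu \mapsto \mu^{2 | s |} / ( \mu + \lambda )^2$ is not monotone on the spectrum of $\sigma_B$, so the clean ``evaluate at $\lambda_{\min}$'' estimate must be replaced by a weaker one; it then has to be checked that this still yields the $\lambda_{\min}^{-1}$ scaling --- it does, because the slack is absorbed entirely into the $\alpha$-dependent constant. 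Finally, since the argument uses only the gradient formulas and positivity, it applies verbatim to locally bounded subgradients, which is all that Theorem~\ref{Convergence_Alg:1} requires.
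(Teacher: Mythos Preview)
Your approach is correct and would go through, but it takes a substantially more elaborate route than the paper's. The paper's proof is essentially two lines: it uses (implicitly) that the gradient is sign--semidefinite---a consequence of the operator monotonicity of $u \mapsto \pm u^{(1-\alpha)/z}$ on the relevant ranges, which your integral representation also makes manifest---so that $\|\nabla f\|_1 = \lvert\operatorname{Tr}[\nabla f]\rvert$; together with $\operatorname{Tr}[\mathsf{D}g(\sigma)[R]] = \operatorname{Tr}[g'(\sigma)R]$ this collapses everything to a closed-form trace. One cyclic permutation then rewrites that trace as $\langle \tau, \sigma_B^{-1}\rangle$ for a density matrix $\tau$, and H\"older's inequality gives the sharp bound $\|\sigma_B^{-1}\|_\infty$ with constant exactly~$1$, uniformly in $\alpha$.

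Your route rediscovers the semidefiniteness as a by-product (each $\lambda$-integrand is $(\sigma+\lambda)^{-1}R(\sigma+\lambda)^{-1}$) but then estimates the resulting integrals pointwise via resolvent comparisons and Beta integrals rather than collapsing them. The cost is the case split on $\alpha \lessgtr 1$, the integration-by-parts bookkeeping, and---more importantly---an $\alpha$-dependent constant: for the Petz case with $\alpha \in (1,2)$ you get $(\alpha-1)^{-1}\lambda_{\min}^{-1}$, and for the sandwiched case with $\alpha>1$ the bound $\mu^{2|s|}\le(\mu+\lambda)^{2|s|}$ also introduces slack. This still suffices for the only application in the paper (local boundedness in Theorem~\ref{Convergence_Alg:1}), but it does not quite establish the proposition as literally stated with constant~$1$. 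What your approach buys is robustness: it does not rely on the trace identity for Fr\'echet derivatives and would survive more general numerators, whereas the paper's argument is tailored to exactly this structure.
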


\begin{proof}
Using the cyclic property of trace and H\"older's inquality, one has
\begin{align}
    \left\| \nabla\, f_{\mathrm{Petz}}(\sigma_B) \right\|_1
    &= \left\langle \mathbb{E}_{P_X} \frac{ \sigma_B^{\frac{1-\alpha}{2}} (\rho_B^X)^\alpha \sigma_B^{\frac{1-\alpha}{2}} }{ \Tr\left[(\rho_B^X)^\alpha \sigma_B^{1-\alpha}\right]}, \sigma_B^{-1} \right\rangle \\
    &\leq \left\|\sigma_B^{-1} \right\|_\infty.
\end{align}
Following similar reasoning, we have 
\begin{align}\left\| \nabla\, f_{\mathrm{sandA}}(\sigma_B) \right\|_1\leq \left\|\sigma_B^{-1} \right\|_\infty,
\end{align}
and
\begin{align}
\left\|\nabla D_{\alpha}^{*}\left(\rho_{AB}|| \tau_{A} \otimes \sigma_{B} \right)\right\|_1 \leq \left\|\sigma_B^{-1}\right\|_\infty.
\end{align}
\end{proof}

One can observe that Theorem \ref{Convergence_Alg:1} applies to solving the problems in Section \ref{sec:problem} by entropic mirror descent with the Polyak step size. 
The first assumption holds because the functions $f_{\mathrm{Petz}}, f_{\mathrm{sandA}}, f_{\mathrm{sandR}}$  are lower bounded by 0, and the function $-f_{\mathrm{cond}}$ is lower bounded by $-\log \min\{|\mathcal{H}_A|, |\mathcal{H}_B|\}$ (see Section~\ref{sec:problem}).
The second assumption holds as the closure of \( \dom \nabla h \) corresponds to the set of positive semidefinite matrices, which contains the set of density matrices \( \mathcal{D} \).
The third assumption holds because entropic mirror descent always generates full-rank iterates.
The fourth assumption is satisfied: the gradient of the von Neumann entropy is locally bounded on its domain, which implies the essential smoothness property; the strong convexity property is already discussed in Section \ref{subsec:Bregman}.
We verify the last assumption as follows. Since $\mathcal{Z} \cap \dom \nabla h$ consists of full-rank matrices, the minimum eigenvalue of any density matrix in a compact subset of $\mathcal{Z} \cap \dom \nabla h$ is bounded away from zero. Hence, the gradient of $f$ is bounded according to Propositions~\ref{proposition: grad_bounded_by_eigenvalue}.


\section{Numerical Results} \label{sec:numerics}

We report the numerical results of computing the four information-theoretic quantities in Section \ref{sec:problem} by entropic mirror descent with the Polyak step size. 
We compare the convergence speeds with those of entropic mirror descent with Armijo line search---recall it is also guaranteed to converge (see Section \ref{sec:related_work}). 
We tried tuning the parameters in both methods to get fast convergence speeds. 
\begin{figure}[ht]
    \centering
     \includegraphics[width = 12cm]{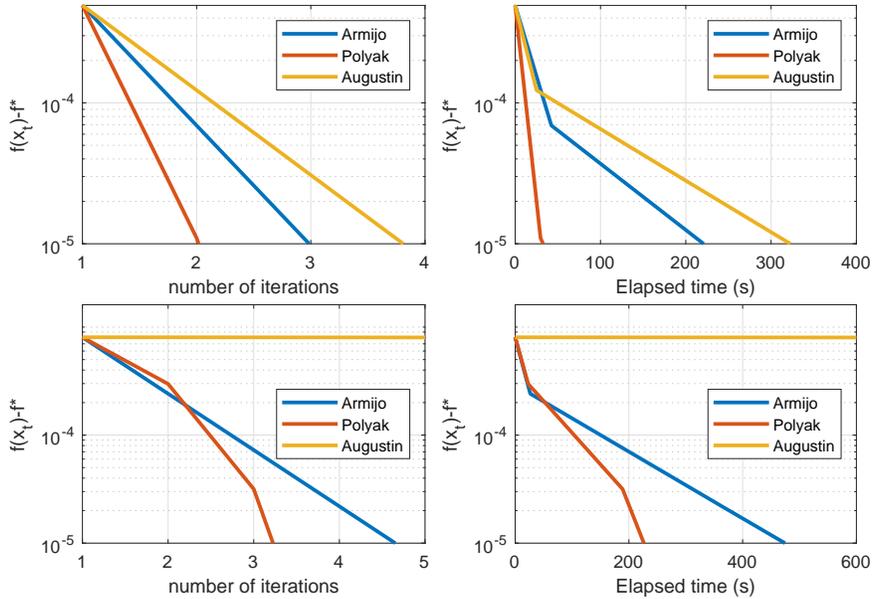}
     \caption{Convergence speeds for computing the Petz--Augustin information of order $\alpha = 0.5$ (upper) and $\alpha = 2$ (bottom).}
     \label{Petz_Aug_Convergence_speed}
\end{figure}
%

Figure \ref{Petz_Aug_Convergence_speed} shows the results of computing the Petz--Augustin information for \( \alpha = 0.5 \) and \( \alpha = 2 \). 
The support cardinality of \( X \) is \( 2^{10} \) and the dimension of \( \mathcal{H}_B \) equals \( 2^{7} \) for both cases. 
The ensemble of quantum states, $\left\{ \rho_B^x \right\}_x$, is generated by the \href{http://www.qetlab.com/RandomDensityMatrix}{RandomDensityMatrix} function of the MATLAB toolbox QETLAB. The Fr\'echet derivative of the power function is computed by the MATLAB package given in \cite{Higham2013}.
We set $\Bar{\alpha}=10$, $r = \tau = 1 / 2$ for $\alpha = 0.5$, and $\Bar{\alpha}=8$, $r = \tau = 0.7$ for $\alpha = 2$ in entropic mirror descent with Armijo line search, following the notations in \cite{Li2019a}. 
We set $\delta_1 = 2.5$ for $\alpha = 0.5$, and $\delta_1 = 1$ for $\alpha = 2$, and $\delta = 10^{-5}$, $\gamma = 1.25$, $\beta = 0.75$ for both cases in entropic mirror descent with the Polyak step size. 
We also show the numerical results of a direct extension of the fixed-point iteration proposed in \cite{Nakiboglu2019} to the quantum case, i.e.~
\begin{equation}
\sigma_B^{t+1} = - \left(\sigma_B^{t}\right)^{\frac12} \nabla\, f_{\mathrm{Petz}}\left(\sigma_B^{t}\right) \left(\sigma_B^{t}\right)^{\frac12}, \label{Augustin}
\end{equation}
for which we label as ``Augustin'' in the figure, though it lacks a convergence guarantee in the quantum case. 



\begin{figure}[ht]
    \centering
     \includegraphics[width = 12cm]{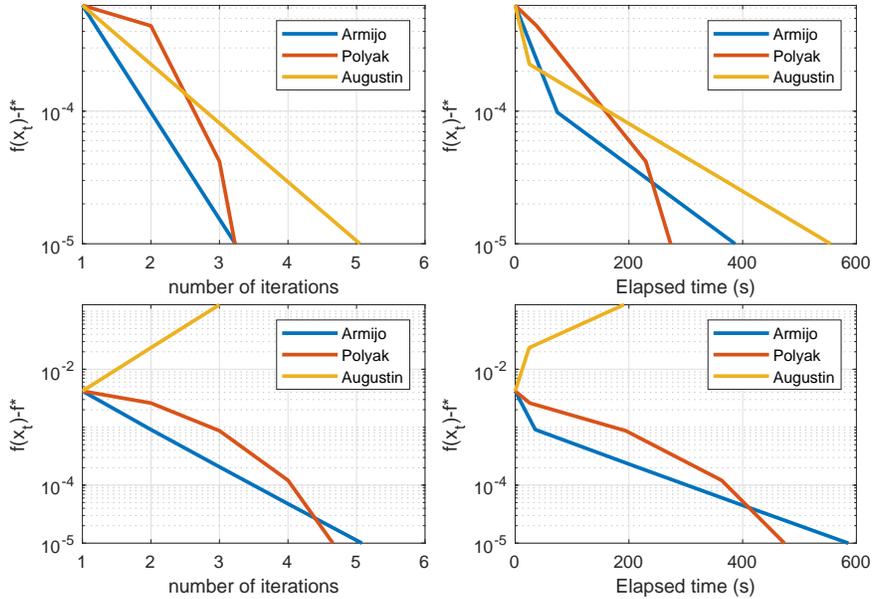}
     \caption{Convergence speeds for computing the sandwiched Augustin information of order $\alpha = 0.5$ (upper) and $\alpha = 10$ (bottom).}
     \label{Sand_Aug_Convergence_speed}
\end{figure}

Figure \ref{Sand_Aug_Convergence_speed} shows the results of computing the sandwiched Augustin information for \( \alpha = 0.5 \) and \( \alpha = 10 \). 
The support cardinality of $X$ is $2^{10}$ and the dimension of $\mathcal{H}_B$ equals $2^{7}$ for both cases. 
We set $\Bar{\alpha}=7$, $r = \tau = 0.6$ for \( \alpha = 0.5 \), and $\Bar{\alpha}=4, r = \tau = 0.7$ for $\alpha = 10$ in entropic mirror descent with Armijo line search. 
We set $\delta_1 = 5$ for $\alpha = 0.5$, and $\delta_1 = 1$ for $\alpha = 10$, and $\gamma = 1.3$, $\beta = 0.7$,  $\delta = 10^{-5}$ for both cases in entropic mirror descent with the Polyak step size. We also show the numerical results of a direct extension of the fixed-point iteration proposed in \cite{Nakiboglu2019} to the quantum case, for which we label as ``Augustin'' in the figure. The formula can be obtained by replacing $\nabla f_{\mathrm{Petz}}$ with $\nabla f_{\mathrm{sandA}}$ in \eqref{Augustin}. 

\begin{figure}[ht]
    \centering
     \includegraphics[width = 12cm]{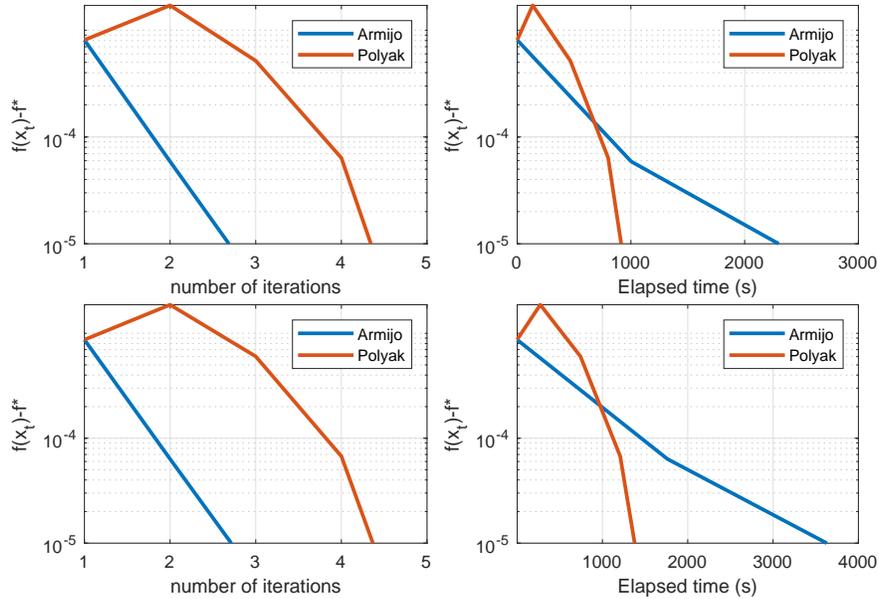}
     \caption{Convergence speeds for computing the conditional sandwiched R\'enyi entropy (upper) and the sandwiched R\'enyi information (bottom), both of order $\alpha = 10$.}
     \label{Cond. Sand. and Sand. Ren_Convergence_speed}
\end{figure}

Figure \ref{Cond. Sand. and Sand. Ren_Convergence_speed} shows the results of computing the conditional sandwiched R\'enyi entropy and sandwiched R\'enyi information, both of order \( \alpha = 10 \).
The dimensions of $\mathcal{H}_A$ and $\mathcal{H}_B$ are both $2^{6}$. 
We set $\Bar{\alpha}=8, r = \tau = 0.6$ in entropic mirror descent with Armijo line search. 
We set $\delta_1 = 1$, $\delta = 10^{-5}$, $\gamma = 1.3$, and $\beta = 0.7$ in entropic mirror descent with the Polyak step size. 



Since the optimal value is unknown, in the three figures, we approximate the optimal value by the result of entropic mirror descent with Armijo line search after 100 iterations. 
The approximations are denoted by \( f^* \) in the figures. 

Obviously, entropic mirror descent with the Polyak step size is competitive in terms of the elapsed time, though in terms of the number of iterations, it seems to be slower for computing the conditional sandwiched R\'enyi entropy and the sandwiched R\'enyi information. 
The reason is that entropic mirror descent with Armijo line search requires evaluating the function value at an indefinite number of points in each iteration, which introduces additional computational burden. 
The quantum extension of the fixed-point iteration in \cite{Nakiboglu2019} looks also competitive for computing the Petz-Augustin information and the sandwiched Augustin information of order \( 0.5 \), but fails to converge for the order-\( 2 \) case and the order-\( 10 \) case---recall that this heuristic extension lacks a convergence guarantee currently. 

\section{Concluding Remarks}
We have proposed mirror descent with the Polyak step size and established its convergence. 
Numerical results show that for minimizing R\'enyi divergences, entropic mirror descent with the Polyak step size is competitive in the convergence speed. 
A drawback of our method is that it only guarantees convergence to approximate solution, and the desired numerical accuracy needs to be given as a parameter of the method. 
We are working on improving the proposed method such that it provably converges to the optimum while keeping the current computational efficiency. 

Our convergence guarantee is applicable to many other optimization problems in quantum information studies, such as computing the relative entropy of any quantum resource \cite{VedralPlenioRippinKnight97, BG15, CG19}, optimizing quantum entropies with energy constraints \cite{Hol12}, minimizing the second argument of the $\varepsilon$-hypothesis testing divergence \cite[Definition~17]{MW14}, computing the quantum rate distortion function \cite{DHW13, DHWW13}, and computing the quantum information bottleneck function and related quantities \cite{DHW19, CDR19a, CDR21}.
To efficiently implement mirror descent with the Polyak step size in these problems, however, is non-trivial and left as a future work. 
%
%

In the numerical experiments, we always choose the initial iterates to be \( I / d \), where \(d\) is the dimension of \(\mathcal{H}_B\). 
Whereas \( I / d \) seems to be a natural choice and our convergence guarantee holds regardless of the specific choice of the initial iterate, we observe that if we change the initial iterate, we need to tune the parameters in the proposed algorithm again to achieve a fast convergence speed.
We are working on incorporating line search techniques into the proposed algorithm, in order to accelerate the proposed algorithm and increase its robustness in convergence speed with respect to the choice of the initial iterate.

\bibliographystyle{alpha}
\bibliography{ref,list}

\newcommand{\etalchar}[1]{$^{#1}$}
\begin{thebibliography}{MLDS{\etalchar{+}}13}

\bibitem[AHKS87]{Allen1987}
Ellen Allen, Richard Helgason, Jeffery Kennington, and Bala Shetty.
\newblock A generalization of {P}olyak's convergence result for subgradient
  optimization.
\newblock {\em Math. Program.}, 37:309--317, 1987.

\bibitem[AT06]{Auslender2006}
Alfred Auslender and Marc Teboulle.
\newblock Interior gradient and proximal methods for convex and conic
  optimization.
\newblock {\em SIAM J. Optim.}, 16(3):697--725, 2006.

\bibitem[BB97]{Bauschke1997}
Heinz~H. Bauschke and Jonathan~M. Borwein.
\newblock Legendre functions and the method of random {B}regman projections.
\newblock {\em J. of Convex Anal.}, 1(1):27--67, 1997.

\bibitem[BBT17]{Bauschke2017}
Heinz~H. Bauschke, J\'{e}r\^{o}me Bolte, and Marc Teboulle.
\newblock A descent lemma beyond {L}ipschitz gradient continuity: first-order
  methods revisited and applications.
\newblock {\em Math. Oper. Res.}, 42(2):330--348, 2017.

\bibitem[Bha97]{Bhatia1997}
Rajendra Bhatia.
\newblock {\em Matrix Analysis}.
\newblock Springer, New York, NY, 1997.

\bibitem[BT03]{Beck2003}
Amir Beck and Marc Teboulle.
\newblock Mirror descent and nonlinear projected subgradient methods for convex
  optimization.
\newblock {\em Oper. Res. Lett.}, 31:167--175, 2003.

\bibitem[BTN15]{BenTal2015}
Aharon Ben-Tal and Arkadi Nemirovski.
\newblock Lectures on modern convex optimization.
\newblock 2015.

\bibitem[Bub15]{Bubeck2015a}
S\'{e}bastien Bubeck.
\newblock Convex optimization: Algorithms and complexity.
\newblock {\em Found. Trends Mach. Learn.}, 8(3--4):231--358, 2015.

\bibitem[BZK20]{Berrada2020}
Leonard Berrada, Andrew Zisserman, and M.~Pawan Kumar.
\newblock Training neural networks for and by interpolation.
\newblock In {\em Proc. 37th Int. Conf. Machine Learning}, pages 799--809,
  2020.

\bibitem[CDR19]{CDR19a}
Hao-Chung Cheng, Nilanjana Datta, , and Cambyse Rouz\'{e}.
\newblock Strong converse bounds in quantum network information theory:
  distributed hypothesis testing and source coding.
\newblock 2019.
\newblock arXiv:1905.00873.

\bibitem[CDR21]{CDR21}
Hao-Chung Cheng, Nilanjana Datta, and Cambyse Rouz\'{e}.
\newblock Strong converse bounds in quantum network information theory.
\newblock {\em {IEEE} Transactions on Information Theory}, 67(4):2269--2292,
  Apr. 2021.

\bibitem[CG19]{CG19}
Eric Chitambar and Gilad Gour.
\newblock Quantum resource theories.
\newblock {\em Reviews of Modern Physics}, 91(2), Apr. 2019.

\bibitem[CGH18]{CGH18}
Hao-Chung Cheng, Li~Gao, and Min-Hsiu Hsieh.
\newblock Properties of scaled noncommutative {R\'enyi} and {Augustin}
  information.
\newblock 2018.
\newblock arXiv:1811.0421.

\bibitem[CHDH18]{CHDH2-2018}
Hao-Chung Cheng, Eric~P. Hanson, Nilanjana Datta, and Min-Hsiu Hsieh.
\newblock Duality between source coding with quantum side information and c-q
  channel coding.
\newblock 2018.
\newblock arXiv:1809.11143.

\bibitem[CHDH21]{Cheng2021a}
Hao-Chung Cheng, Eric~P. Hanson, Nilanjana Datta, and Min-Hsiu Hsieh.
\newblock Non-asymptotic classical data compression with quantum side
  information.
\newblock {\em {IEEE} Transactions on Information Theory}, 67(2), Feb. 2021.

\bibitem[CHT19]{CHT19}
Hao-Chung Cheng, Min-Hsiu Hsieh, and Marco Tomamichel.
\newblock Quantum sphere-packing bounds with polynomial prefactors.
\newblock {\em {IEEE} Transactions on Information Theory}, 65(5):2872--2898,
  May 2019.

\bibitem[DHW13]{DHW13}
Nilanjana Datta, Min-Hsiu Hsieh, and Mark~M. Wilde.
\newblock Quantum rate distortion, reverse {Shannon} theorems, and
  source-channel separation.
\newblock {\em {IEEE} Transactions on Information Theory}, 59(1):615--630, Jan.
  2013.

\bibitem[DHW19]{DHW19}
Nilanjana Datta, Christoph Hirche, and Andreas Winter.
\newblock Convexity and operational interpretation of the quantum information
  bottleneck function.
\newblock In {\em 2019 {IEEE} International Symposium on Information Theory
  ({ISIT})}. {IEEE}, July 2019.

\bibitem[DHWW13]{DHWW13}
Nilanjana Datta, Min-Hsiu Hsieh, Mark~M. Wilde, and Andreas Winter.
\newblock Quantum-to-classical rate distortion coding.
\newblock {\em Journal of Mathematical Physics}, 54(4):042201, Apr. 2013.

\bibitem[DLLM21]{Orazio2021}
Ryan D’Orazio, Nicolas Loizou, Issam Laradji, and Ioannis Mitliagkas.
\newblock Stochastic mirror descent: Convergence analysis and adaptive variants
  via the mirror stochastic {P}olyak stepsize.
\newblock 2021.
\newblock arXiv:2110.15412v2.

\bibitem[Dup21]{Dup21}
Fr{\'e}d{\'e}ric Dupuis.
\newblock Privacy amplification and decoupling without smoothing.
\newblock 2021.
\newblock arXiv:2105.05342.

\bibitem[DW17]{DW14}
Marco Dalai and Andreas Winter.
\newblock Constant compositions in the sphere packing bound for
  classical-quantum channels.
\newblock {\em {IEEE} Transactions on Information Theory}, 63(9), Sept. 2017.

\bibitem[FGSLB15]{BG15}
Gilad~Gour Fernando G.~S.~L.~Brand{\~a}o.
\newblock Reversible framework for quantum resource theories.
\newblock {\em Physical Review Letters}, 115(7), Aug. 2015.

\bibitem[Gof77]{Goffin1977}
Jean-Louis Goffin.
\newblock On convergence rates of subgradient optimization methods.
\newblock {\em Math. Program.}, 13:329--347, 1977.

\bibitem[Hay17]{Hay17}
Masahito Hayashi.
\newblock {\em Quantum Information Theory}.
\newblock Springer Berlin Heidelberg, 2017.

\bibitem[HL13]{Higham2013}
Nicholas~J. Higham and Lijing Lin.
\newblock An improved {S}chur–{P}ad\'e algorithm for fractional powers of a
  matrix and their fr\'echet derivatives.
\newblock {\em {SIAM} J. Matrix Anal. Appl.}, 34(3):1341--1360, 2013.

\bibitem[Hol12]{Hol12}
Alexander~S. Holevo.
\newblock {\em Quantum Systems, Channels, Information}.
\newblock De Gruyter, Nov. 2012.

\bibitem[HOT81]{Hiai1981}
Fumio Hiai, Masanori Ohya, and Makoto Tsukada.
\newblock Sufficiency, {KMS} condition and relative entropy in {v}on {N}eumann
  algebras.
\newblock {\em Pac. J. Math.}, 96(1):99--109, 1981.

\bibitem[HP14]{HP14}
Fumio Hiai and D{\'{e}}nes Petz.
\newblock {\em Introduction to Matrix Analysis and Applications}.
\newblock Springer International Publishing, 2014.

\bibitem[HT16]{HT14}
Masahito Hayashi and Marco Tomamichel.
\newblock Correlation detection and an operational interpretation of the
  {R{\'{e}}nyi} mutual information.
\newblock {\em Journal of Mathematical Physics}, 57(10):102201, Oct. 2016.

\bibitem[JN12]{Juditsky2012}
Anatoli Juditsky and Arkadi Nemirovski.
\newblock First-order methods for nonsmooth convex large-scale optimization,
  {I}: General purpose methods.
\newblock In Suvrit Sra, Sebastian Nowozin, and Stephen~J. Wright, editors,
  {\em Optimization for Machine Learning}, chapter~5. MIT Press, Cambridge, MA,
  2012.

\bibitem[KBLG18]{Knee2018a}
George~C. Knee, Eliot Bolduc, Jonathan Leach, and Erik~M. Gauger.
\newblock Quantum process tomography via completely positive and
  trace-preserving projection.
\newblock {\em Phys. Rev. A}, 2018.

\bibitem[KRS09]{Konig_2009}
Robert Konig, Renato Renner, and Christian Schaffner.
\newblock The operational meaning of min- and max-entropy.
\newblock {\em {IEEE} Transactions on Information Theory}, 55(9):4337--4347,
  Sept. 2009.

\bibitem[KW20]{KW20}
Sumeet Khatri and Mark~M. Wilde.
\newblock Principles of quantum communication theory: A modern approach.
\newblock 2020.
\newblock arXiv:2011.04672.

\bibitem[LC19]{Li2019a}
Yen-Huan Li and Volkan Cevher.
\newblock Convergence of the exponentiated gradient method with {A}rmijo line
  search.
\newblock {\em J. Optim. Theory Appl.}, 181(2):588--607, 2019.

\bibitem[LFN18]{Lu2018}
Haihao Lu, Robert~M. Freund, and Yurii Nesterov.
\newblock Relatively smooth convex optimization by first-order methods, and
  applications.
\newblock {\em SIAM J. Optim.}, 28(1):333--354, 2018.

\bibitem[LVLLJ21]{Loizou2021}
Nicolas Loizou, Sharan Vaswani, Issam Laradji, and Simon Lacoste-Julien.
\newblock Stochastic {P}olyak step-size for {SGD}: An adaptive learning rate
  for fast convergence.
\newblock In {\em Proc. 24th Int. Conf. Artificial Intelligence and
  Statistics}, pages 1306--1314, 2021.

\bibitem[MLDS{\etalchar{+}}13]{MDS+13}
Martin M{\"u}ller-Lennert, Fr{\'e}d{\'e}ric Dupuis, Oleg Szehr, Serge Fehr, and
  Marco Tomamichel.
\newblock On quantum {R{\'e}nyi} entropies: A new generalization and some
  properties.
\newblock {\em Journal of Mathematical Physics}, 54(12):122203, 2013.

\bibitem[MO17]{MO17}
Mil{\'{a}}n Mosonyi and Tomohiro Ogawa.
\newblock Strong converse exponent for classical-quantum channel coding.
\newblock {\em Communications in Mathematical Physics}, 355(1):373--426, June
  2017.

\bibitem[MO21]{MO21}
Milan Mosonyi and Tomohiro Ogawa.
\newblock Divergence radii and the strong converse exponent of
  classical-quantum channel coding with constant compositions.
\newblock {\em {IEEE} Transactions on Information Theory}, 67(3):1668--1698,
  Mar. 2021.

\bibitem[MW14]{MW14}
William Matthews and Stephanie Wehner.
\newblock Finite blocklength converse bounds for quantum channels.
\newblock {\em {IEEE} Transactions on Information Theory}, 60(11):7317--7329,
  Nov. 2014.

\bibitem[Nak19]{Nakiboglu2019}
B.~Nakibo\u{g}lu.
\newblock The {A}ugustin capacity and center.
\newblock {\em Probl. Inf. Transm.}, 55(4):299--342, 2019.

\bibitem[NB00]{Nedic2000}
Angelia Nedi{\'{c}} and Dimitri~P. Bertsekas.
\newblock Convergence rate of incremental subgradient algorithms.
\newblock In S.~Uryasev and P.~Pardalos, editors, {\em Stochastic Optimization:
  Algorithms and Applications}, pages 263--304. Kluwer Academic, 2000.

\bibitem[NB01]{Nedic2001}
Angelia Nedi{\'{c}} and Dimitri~P. Bertsekas.
\newblock Incremental subgradient methods for nondifferentiable optimization.
\newblock {\em SIAM J. Optim.}, 12(1):109--138, 2001.

\bibitem[NY83]{Nemirovsky1983}
A.~S. Nemirovsky and D.~B. Yudin.
\newblock {\em Problem Complexity and Method Efficiency in Optimization}.
\newblock John Wiley \& Sons, Chichester, 1983.

\bibitem[OP19]{Oberman2019}
Adam~M. Oberman and Mariana Prazeres.
\newblock Stochastic gradient descent with {P}olyak's learning rate.
\newblock 2019.
\newblock arXiv:1903.08688.

\bibitem[Pet86]{Pet86}
D{\'{e}}nes Petz.
\newblock Quasi-entropies for finite quantum systems.
\newblock {\em Reports on Mathematical Physics}, 23(1):57--65, Feb. 1986.

\bibitem[Pol69]{Polyak1969}
Boris~T. Polyak.
\newblock Minimization of unsmooth functionals.
\newblock {\em Z. Vychisl. Mat. i Mat. Fiz.}, 9:509--521, 1969.

\bibitem[Ren05]{Ren05}
Rennato Renner.
\newblock Security of quantum key distribution.
\newblock 2005.
\newblock PhD thesis, ETH Zurich, arXiv: quant-ph/0512258.

\bibitem[Teb18]{Teboulle2018}
Marc Teboulle.
\newblock A simplified view of first order methods for optimization.
\newblock {\em Math. Program., Ser. B}, 170:67--96, 2018.

\bibitem[Tom12]{Tom12}
Marco Tomamichel.
\newblock Smooth entropies: A tutorial with focus on applications in
  cryptography.
\newblock 2012.
\newblock http://2012.qcrypt.net/.

\bibitem[Tom16]{Tom16}
Marco Tomamichel.
\newblock {\em Quantum Information Processing with Finite Resources}.
\newblock Springer International Publishing, 2016.

\bibitem[VPRK97]{VedralPlenioRippinKnight97}
V.~Vedral, M.~B. Plenio, M.~A. Rippin, and P.~L. Knight.
\newblock Quantifying entanglement.
\newblock {\em Physical Review Letters}, 78:2275--2279, Mar. 1997.

\bibitem[Wil17]{Wil16}
Mark~M. Wilde.
\newblock {\em Quantum Information Theory}.
\newblock Cambridge University Press, 2 edition, 2017.

\bibitem[WWY14]{WWY14}
Mark~M. Wilde, Andreas Winter, and Dong Yang.
\newblock Strong converse for the classical capacity of entanglement-breaking
  and {Hadamard} channels via a sandwiched {R{\'{e}}nyi} relative entropy.
\newblock {\em Communications in Mathematical Physics}, 331(2):593--622, July
  2014.

\bibitem[YFT19]{Youssry2019}
Akram Youssry, Christopher Ferrie, and Marco Tomamichel.
\newblock Efficient online quantum state estimation using a
  matrix-exponentiated gradient method.
\newblock {\em New J. Phys.}, 21(033006), 2019.

\end{thebibliography}

\appendix
%


\section{Dimension Dependence}
\label{Dim dependence}
It is well known that entropic mirror descent is efficient when the parameter dimension is large. 
In particular, assuming the objective function is convex and Lipschitz, the convergence speed of entropic mirror descent is only logartihmically dependent on the parameter dimension, exhibiting a polynomial improvement over gradient descent (see, e.g., \cite{Juditsky2012,Bubeck2015a}). 
Unfortunately, the objective functions we consider are not Lipschitz. 
We conduct numerical experiments to study how the convergence speed of entropic mirror descent with the Polyak step size scales with the parameter dimension. 

We consider the effect of the dimension of $\mathcal{H}_B$ on the iteration complexity, the number of iterations required to achieve an optimization error requirement, of entropic mirror descent with the Polyak step size. 
The iteration number and the corresponding elapsed time are recorded when the error $f(x_t)-f(x^\star)$ is less than $10^{-5}$. 
The results for minimizing the Petz--Augustin information \eqref{eq:Petz_Augustin} and sandwiched Augustin information \eqref{eq:sand_Augustin} are presented in Figure \ref{Iter_Complexity_Petz_SandA}. 
For computing the Petz--Augustin information, we set $\delta_1 = 4,$ $\delta = 10^{-5},$ $\rho = 1.25,$ and $\beta = 0.75$ in our method; 
for computing the sandwiched Augustin information, we set $\delta_1 = 5,$ $\delta = 10^{-5},$ $\rho = 1.1,$ and $\beta = 0.9$.
The support cardinality of $X$ is 10. 
The results for minimizing the conditional sandwiched R\'enyi entropy \eqref{Conditional sandwiched Renyi Entropy} and sandwiched R\'enyi information \eqref{Sandwiched Renyi Information} are presented in Figure \ref{Iter_Complexity_Cond_SandR}. 
For both cases, we set $\delta_1 = 1,$ $\delta = 10^{-5},$ $\rho = 1.1,$ and $\beta = 0.9$ in our method. 
The dimension of $\mathcal{H}_A$ is also 10.
We can observe that the iteration complexities grow sub-linearly with the dimension. 
We also plot the elapsed time for the reader's reference. 
\begin{figure}[ht]
    \centering
     \includegraphics[width = 7.7cm]{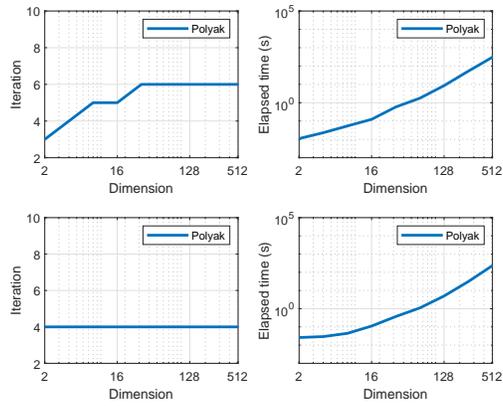}
     \caption{Iteration complexities of computing the Petz--Augustin information (upper) and sandwiched Augustin information (bottom), both of order \( \alpha = 0.5 \).
     }
     \label{Iter_Complexity_Petz_SandA}
\end{figure}

\begin{figure}[ht]
    \centering
     \includegraphics[width = 7.7cm]{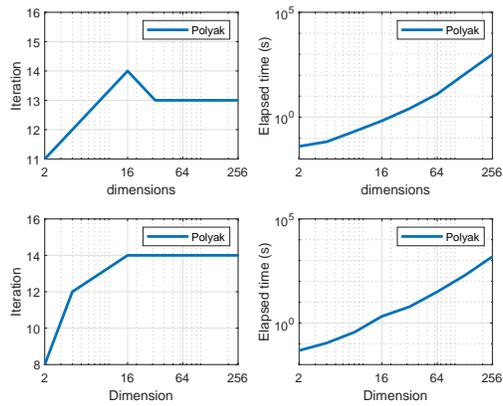}
     \caption{Iteration complexities of computing the conditional sandwiched R\'enyi entropy (upper) and sandwiched R\'enyi information (bottom), both of order \( \alpha = 10 \). 
     }
     \label{Iter_Complexity_Cond_SandR}
\end{figure}

\end{document}